\newtheorem{theorem}{Theorem}
\newtheorem{lemma}{Lemma}
\begin{document}

\title[Article Title]{A Method to Generate Multi-interval Pairwise Compatibility Graphs}

\author[1]{\fnm{Seemab Hayat}}\email{shayat@math.qau.edu.pk}

\author*[1,2]{\fnm{Naveed Ahmed Azam}\email{azam@amp.i.kyoto-u.ac.jp}}

\affil[1]{\orgdiv{Department of Mathematics}, \orgname{Quaid-i-Azam University}, \city{Islamabad}, \country{Pakistan}}

\affil[2]{\orgdiv{Department of Applied Mathematics and Physics}, \orgname{Kyoto University}, \city{Kyoto}, \country{Japan}}

\abstract{Reconstruction of evolutionary relationships between species is an important topic in the field of computational biology. 
Pairwise compatibility graphs (PCGs) are used to model such relationships. 
A graph is a PCG if its edges can be represented by the distance between the leaves of an edge-weighted tree within a fixed interval.
If the number of intervals is more than one, then the graph with such a tree representation is called a multi-interval PCG. 
The aim of this paper is to generate all multi-interval PCGs with a given number of vertices. 
For this purpose, we propose a method to generate almost all multi-interval PCGs corresponding to a given tree by randomly assigning edge weights and 
selecting typical intervals.  
To reduce the exponential tree search space, we theoretically prove that for each multi-interval PCG there exists a tree whose internal vertices have degree exactly three, and developed an algorithm to enumerate such trees. 
The proposed method is applied to enumerate all two-interval PCGs with up to ten vertices. 
Our computational results establish that all graphs with up to ten vertices are 2-IPCGs, making significant progress towards the open problem of determining whether a non-2-IPCG exists with fewer than 135 vertices.}

\keywords{PCG, Multi-interval PCG, Tree, Graph isomorphism, Phylogenetics}
\maketitle

\section{Introduction}
The study of phylogeny is of significant importance in the field of computational biology~\cite{bayzid2010pairwise}. The term phylogeny describes the evolutionary relationships between species, and the process of phylogenetic reconstruction focuses on determining these relationships for groups of organisms. These relationships are represented using a tree structure known as a phylogenetic tree, which serves as a visual representation of the evolutionary history of the species. In a phylogenetic tree, there are two main components: leaves and branches. The leaves are located at the tip of the tree and symbolize the organisms that are being analyzed in terms of their evolutionary history. The evolution of the example DNA sequences is visualized as a phylogenetic tree in Fig.~\ref{fig:phylogeny_tree}.
\begin{figure}[h!]
        \centering
         \includegraphics[width=0.4\textwidth]{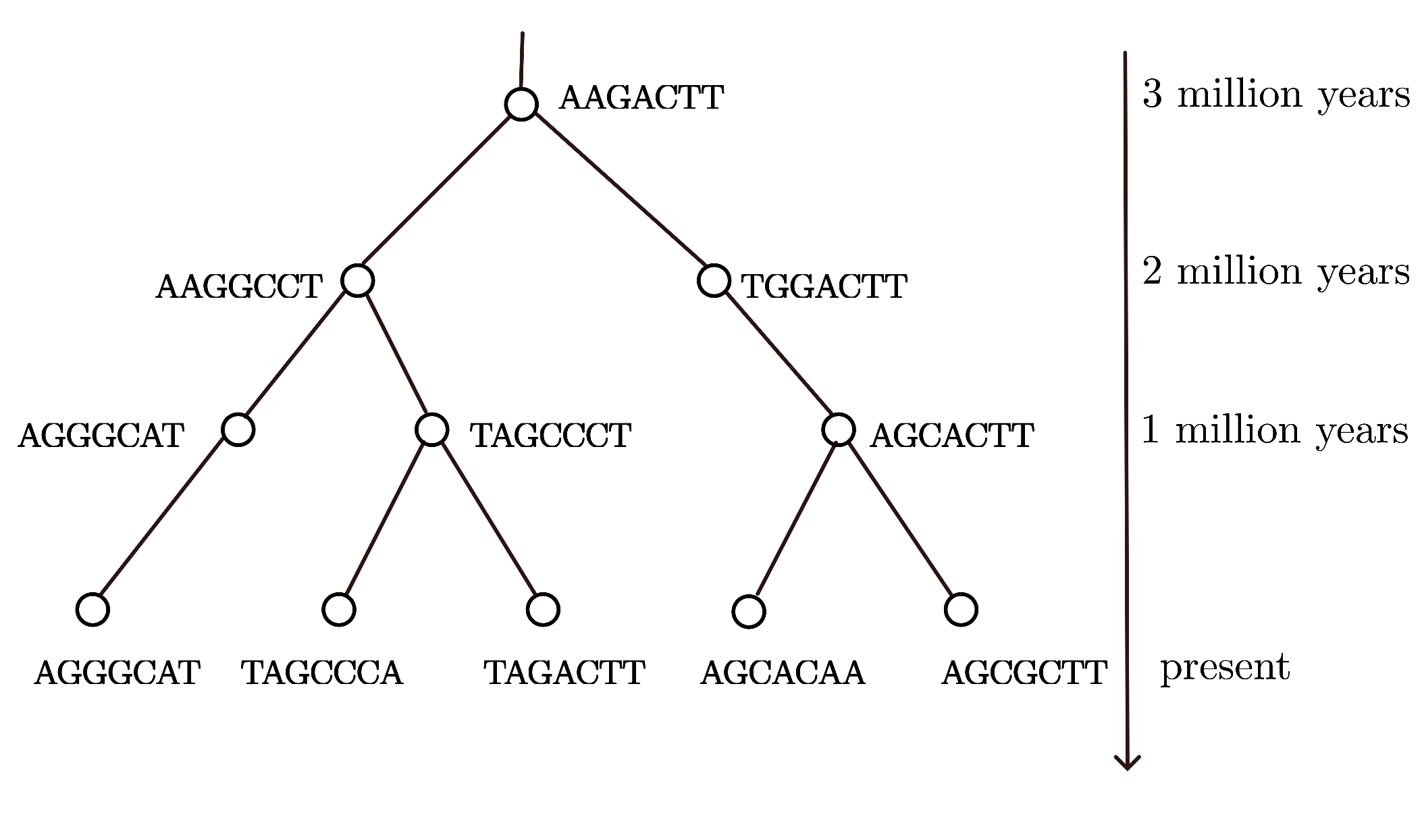}
        \caption{Phylogenetic tree of evolution of the DNA sequences~\cite{tandy2009}}
        \label{fig:phylogeny_tree}
    \end{figure}
      
Kearney et al.~\cite{kearney2003efficient} introduced the concept of {\em pairwise compatibility graphs (PCGs)} and demonstrated their application in modeling evolutionary relationships among groups of organisms. 
A graph with $n$ vertices is said to be a PCG if there exist an edge-weighted tree with $n$ leaves, an interval of non-negative real numbers, and a correspondence between the leaves of the tree and the vertices of the graph, such that two vertices are adjacent in the graph if and only if the distance (the sum of the weights of all edges on the path) between the corresponding leaves lies in the interval. 
Fig.~\ref{fig:PCG}(a) illustrates a PCG due to the edge-weighted tree given in Fig.~\ref{fig:PCG}(b), interval $[5, 10]$ and the identity correspondence.
\begin{figure}[h!]
        \centering
         \includegraphics[width=0.4\textwidth]{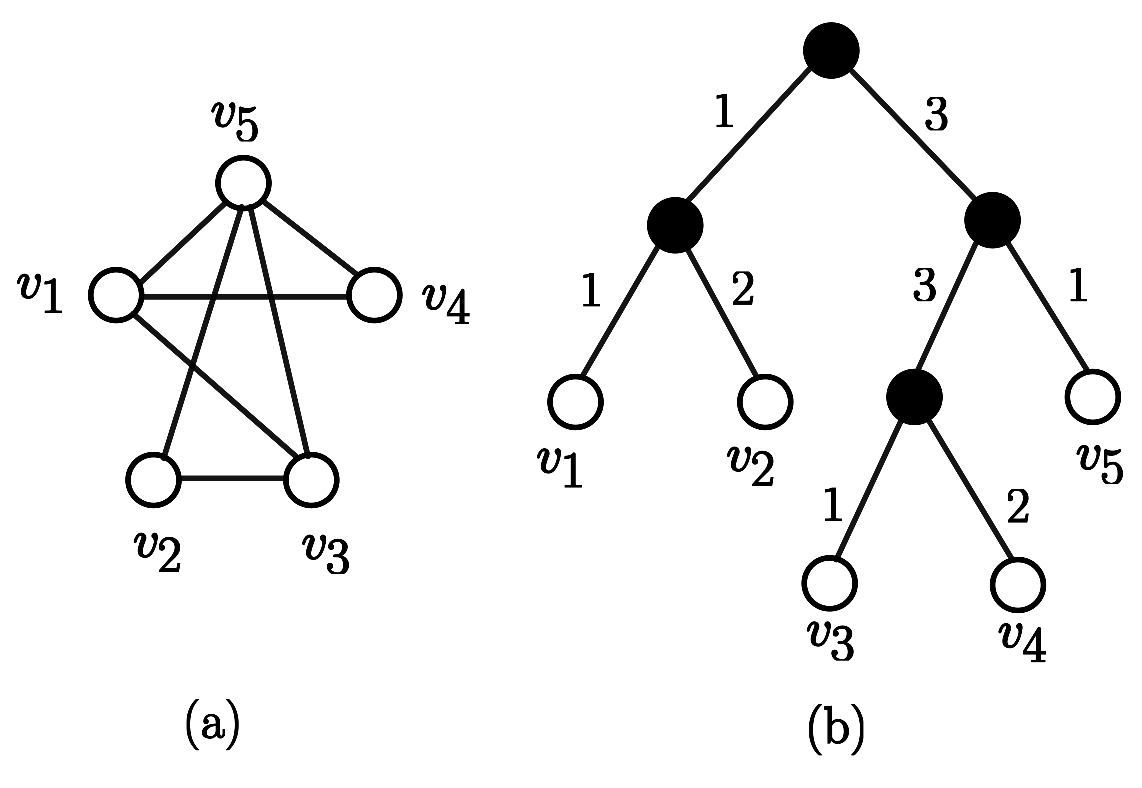}
        \caption{(a) A PCG with five vertices;
        (b) An edge-weighted tree due to which the graph in~(a) is PCG by considering the interval [5, 10] and the identity correspondence}
        \label{fig:PCG}
    \end{figure}

Kearney et al.~\cite{kearney2003efficient} conjectured that every graph is a PCG. The conjecture was then refuted by Yanhaona et al.~\cite{yanhaona2010discovering}  by showing a bipartite graph with fifteen vertices that is not a PCG. 
Similarly, Durocher et al.~\cite{durocher2015graphs} refuted the conjecture by showing a non-planar graph with eight vertices and a planar graph with sixteen vertices that are not PCGs. 
 Phillips~\cite{phillips2002uniform}  exhaustively established that all graphs with less than five vertices are PCGs. 
 This result is further extended by Calamoneri et al.~\cite{calamoneri2013all}  by demonstrating that all graphs with at most seven vertices are PCGs. 
  Xiao and Nagamochi~\cite{xiao2020some}  characterized PCGs and proved that a graph is a PCG if and only if each of its bi-connected components is a PCG. 
 Hossain et al.~\cite{hossain2016necessary}  developed a necessary and sufficient condition for a graph to be a PCG. 
Azam et al.~\cite{azam2018enumerating, azam2021method} proposed a method based on linear programming to enumerate PCGs, and proved that there are exactly seven graphs with eight vertices that are not PCGs. 
 Subsequently, Azam et al.~\cite{azam2022enumeration}  improved the method and enumerated all minimal non-PCGs (non-PCGs whose all proper induced subgraphs are PCG)  with nine vertices. 
 More precisely they proved that there are exactly 1,494 minimal non-PCGs with nine vertices. The number  of minimal non-PCGs with at least 10 vertices is unknown. 
 
Ahmed and Rahman~\cite{ahmed2017multi} introduced a natural generalization of PCGs, called {\em multi-interval PCGs/$k$-interval PCGs ($k$-IPCGs)}. The class of $k$-IPCGs is designed to permit $k$ mutually exclusive intervals of non-negative real numbers rather than just one interval. Ahmed and Rahman~\cite{ahmed2017multi} proved that every graph is a $k$-IPCG for some $k$.
Clearly, every PCG is a $k$-IPCG, but the converse is not true. For example, all wheel graphs with at least nine vertices are not PCGs~\cite{baiocchi2019some}, but are 2-IPCGs~\cite{ahmed2017multi}.
Furthermore, it is known that not every graph is a 2-IPCG. The smallest known graph that is not a 2-IPCG consists of 135 vertices~\cite{calamoneri2021generalizations}.
Calamoneri et al.~\cite{calamoneri2022all}  analytically showed that each of the seven non-PCGs reported by Azam et al.~\cite{azam2021method}  with eight vertices are 2-IPCGs. They further observed that the number of non-PCGs on nine vertices is much bigger than that on eight vertices, and therefore their approach cannot be applied to test if a given graph is a 2-IPCG with at least nine vertices~\cite{calamoneri2022all}. The number and structure of 2-IPCGs with at least nine vertices are unknown.
A summary of the current status of known/unknown PCGs and 2-IPCGs with up to 10 vertices is given in Table~\ref{tab:summaryPCGs}. 
Note that to prove if a given graph is a $k$-IPCG or not, one needs to search over an infinite space of weights and intervals, and an exponentially large number of trees and correspondences. Moreover, the number of graphs increases exponentially with the number of vertices, making the problem more difficult.

In this paper, we propose a method to enumerate $k$-IPCGs with a given number of vertices. 
The idea of our method is to generate almost all $k$-IPCGs for a given tree by randomly assigning weights and adjusting intervals in a typical way. 
To avoid the generation of isomorphic $k$-IPCGs, we use canonical representations proposed by McKay and Piperno~\cite{mckay2014practical}. 
To reduce the exponentially large tree search space, we theoretically prove that every $k$-IPCG can be constructed by using a binary tree, a tree whose all internal vertices have degree exactly three. 
The proposed method is implemented and applied to enumerate all $2$-IPCGs with up to ten vertices. 
With computational results, we prove that all graphs with up to ten vertices are $2$-IPCGs. 
\begin{table}[ht]
\centering
\begin{tabular}{|c|c|c|c|}
\hline
\#vertices & \#graphs & Known/Unknown PCGs & Known/Unknown 2-IPCGs \\
\hline
1 & 1 & Known & Known   \\
2 & 2 & Known& Known \\
3 & 4 & Known & Known  \\
4 & 11 & Known& Known  \\
5 & 34 & Known& Known \\
6 & 156 & Known & Known \\
7 & 1,044 & Known & Known  \\
8 & 12,346 & Known & Known \\
9 & 274,668 & Known & \bf{Unknown}  \\
10 & 12,005,168 & \bf{Unknown} & \bf{Unknown}  \\
\hline
\end{tabular}
\caption{Summary of known/unknown PCGs and 2-IPCGs}
\label{tab:summaryPCGs}
\end{table}

The structure of the paper is as follows:  Preliminaries and related results are discussed in Section~\ref{sec: preliminaries}. 
A relationship between $k$-IPCGs and binary trees, and enumeration of binary trees is discussed in Section~\ref{sec:binary}. 
A method to generate $k$-IPCGs is described in Section~\ref{sec:genPcg}. 
Experimental results are discussed in Section~\ref{sec: result}, and a conclusion with future directions is provided in Section~\ref{sec:conclusion}. 
\section{Preliminaries} \label{sec: preliminaries}
For a simple and undirected graph $G$, we denote by $V(G)$ and $E(G)$ the vertex set and the edge set, respectively. 
An edge connecting the vertices $u$ and $v$ is represented by $uv$. 
If $e = uv \in E(G)$, then the vertices $u$ and $v$ are said to be adjacent, and the edge $e$ is incident to both $u$ and $v$. 
The {\em degree} of a vertex is defined to be the number of edges that are incident to it.  
A labeled graph with the natural edge-labeling is called a \textit{natural-labeled graph}. 
A natural-labeled graph on $n$ vertices with the vertex label set $\{1, 2, \ldots, n\}$ is said to be a \textit{basic-labeled graph}. 
Henceforth, for a simple natural-labeled graph, and a simple basic-labeled graph, we simply use the terms graph and basic-labeled graph, respectively, unless stated otherwise.

Let $h$ be a non-negative integer such that $h \leq |A|-1$, where $A$ is a set. 
We define a \textit{$k$-coloring} of $A$ to be a function from $A$ to the set 
$\{0,1,\ldots, h - 1\}$ of colors. 
Let $\Lambda_{h}(A)$ denote the set of all $h$-colorings of $A$. 
A graph with an $h$-vertex coloring or $h$-edge coloring is said to be a \textit{colored graph}.
We define a \textit{full $h$-vertex coloring} (resp., \textit{full $h$-edge coloring)} of $G$ to be a $h$-vertex coloring (resp., $h$-edge coloring) such that the range set is $\{0,1, \ldots, h- 1\}$.

Let $G$ and $G'$ be two graphs. These graphs are said to be {\em isomorphic} if there exists a bijection $\sigma: V(G) \rightarrow V(G')$ which preserves the edges, i.e., $uv \in E(G)$ if and only if $\sigma(u)\sigma(v) \in E(G')$. 
In this case, we write $G \cong G'$.
Two vertex-colored (resp., edge-colored) graphs $(G,\pi)$ and $(G', \pi')$ (resp., $(G,\rho)$ and $(G', \rho')$) 
are said to be isomorphic as vertex-colored (resp., edge-colored) graphs if there exists an edge preserving bijection $\sigma: V(G) \to V(G')$ that preserves vertex (resp., edge) color, i.e., for each $u\in V(G)$ (resp. $uv \in E(G)$) it holds that $\pi(u) = \pi'(\sigma(u))$  (resp., $\rho(uv) = \rho'(\sigma(u)\sigma(v))$), in which case we write $(G,\pi) \cong_{\textnormal{v}} (G', \pi')$ (resp., $(G,\rho) \cong_{\textnormal{e}} (G', \rho')$).
In all of these cases, we call $\sigma$ an \textit{isomorphism} between the given graphs. 
An isomorphism from $G$ to itself is called an \textit{automorphism} of the graph.
Let $\mathcal{G}_n$ denote the set of all mutually non-isomorphic graphs with $n$ vertices. 

Let $\mathcal{B}_n$ denote the set of all basic-labeled graphs with $n$ vertices. 
Let $H$ denote the set $\{1,2, \ldots, n\}$. 
We define a \textit{canonical representation function} $C:\mathcal{B}_n \times \cup_{h = 0} ^{n-1}\Lambda_{h}(H) \to \mathcal{B}_n \times \cup_{h = 0} ^{n-1}\Lambda_{h}(H)$ such that 
\begin{enumerate}
\item[(i)] For a graph $G \in \mathcal{B}_n$ and a coloring $\pi \in \cup_{h = 0} ^{n-1}\Lambda_{h}(H)$ it holds that $C(G, \pi) \cong_{\textnormal{v}} (G, \pi)$; and 
\item[(ii)] For any two graphs $G, G' \in \mathcal{B}_n$ and any two colorings $\pi, \pi' \in \cup_{h = 0} ^{n-1}\Lambda_{h}(H)$ it holds that $(G,\pi) \cong_{\textnormal{v}} (G', \pi')$ if and only if $C(G, \pi) = C(G', \pi')$. 
\end{enumerate}
Intuitively, a canonical representation function (CRF) maps a given colored graph to a colored graph which is a unique representative of its isomorphism class. 
For a CRF $C$ and a colored graph $(G, \pi) \in \mathcal{B}_n \times \cup_{h = 0} ^{n-1}\Lambda_{h}(H)$, we call $C(G, \pi)$ a \textit{canonical representation} (CR) of the colored graph. For a CRF $C$, we define a \textit{hash function} $f$ which takes a canonical representation of a colored graph as input and outputs a value such that for $(G, \pi), (G', \pi') \in \mathcal{B}_n \times \cup_{h = 0} ^{n-1}\Lambda_{h}(H)$, with $f(C(G,\pi))\neq f(C(G', \pi'))$ it holds that $(G, \pi) \ncong_{\textnormal{v}} (G', \pi')$, i.e., two CRs of two isomorphic vertex colored graphs have the same hash value.

A tree is an acyclic, connected graph. 
The vertices of degree 1 (resp., greater than 1) are called \textit{leaves} (resp.,  \textit{internal vertices}) of a tree. 
A tree with a specific vertex is called a \textit{rooted tree}. 
A \textit{full binary tree} is defined to be a rooted tree in which all internal vertices except the root have degree exactly three. 
For $n\geq 2$, let $\mathcal{F}_{n}$  denote the set of all mutually non-isomorphic full binary trees with $n$ leaves.
A \textit{binary tree} is defined to be a tree in which each internal vertex has degree exactly three. 
Let $\mathcal{T}_{n}$ denote the set of all mutually non-isomorphic binary trees with $n$ leaves. 
An illustration of a binary tree with nine leaves is shown in Fig.~\ref{fig:binarytree}. 
A {\em path} between two vertices $u, v$ is a sequence of vertices with end points $u$ and $v$, and each consecutive pair of vertices forms an edge.
Between any two vertices $u, v \in V(T)$ there exists a unique path, and the set of edges on this path is denoted by $E_T(u,v)$.  
A pair  $(T, w)$ of a tree $T$ and a function  $w: E(T) \rightarrow \mathbb{R}^{+}$
is called an {\em edge-weighted tree}. 
For any two vertices $u, v$ in an edge-weighted tree $(T, w)$, the {\em distance} $d_{T, w}(u,v)$ is defined to be the sum of weights of the edges in the unique path between $u$ and $v$ in $T$, i.e., 
$d_{T, w}(u,v) \triangleq \sum_{ e \in E_T(u,v)} w(e)$. 
\begin{figure}[h!]
        \centering
         \includegraphics[width=0.4\textwidth]{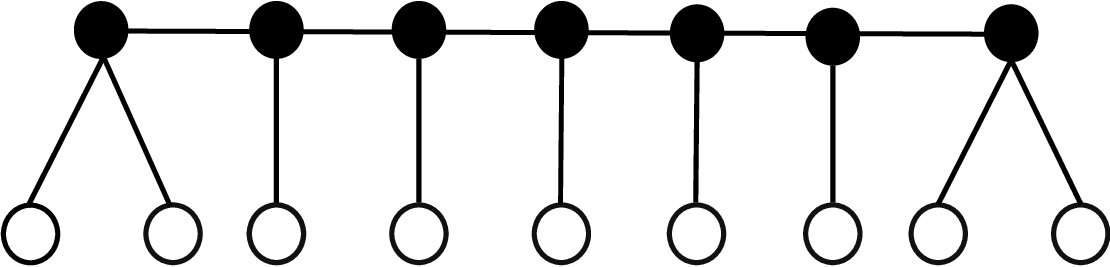}
        \caption{A binary tree with nine leaves}
        \label{fig:binarytree}
    \end{figure}
     
Let $d_{\rm {min}} , d_{\rm {max}} \in \mathbb{R}^{+}$. 
A graph $G$ is said to be a PCG if there exists an edge-weighted tree $(T, w)$, and 
a bijection $\sigma: L(T) \to V(G)$ such that 
there is an edge $uv \in E(G)$ if and only if $d_{T, w}(\sigma(u),\sigma(v)) \in 
[d_{\rm {min}}, d_{\rm {max}}]$.
$k$-IPCGs are a generalization of PCGs where $k$ disjoint intervals are used. 
More precisely, a graph $G$ is said to be a $k$-IPCG if there exist $k$ disjoint intervals 
$I_i = [d^i_{\rm {min}}, d^i_{\rm{max}}], i = 1, \ldots, k$, such that 
$d_{T, w}(\sigma(u),\sigma(v)) \in I_i$. 
In such a case, we call $T$ a {\em witness tree} and 
denote the $k$-IPCG by $k$-IPCG$(T, w, \sigma, I_{1}, \ldots, I_{k})$. An example of 2-IPCG is shown in Fig.~\ref{fig:2-IPCG}. 
\begin{figure}[h!]
        \centering
         \includegraphics[width=0.4\textwidth]{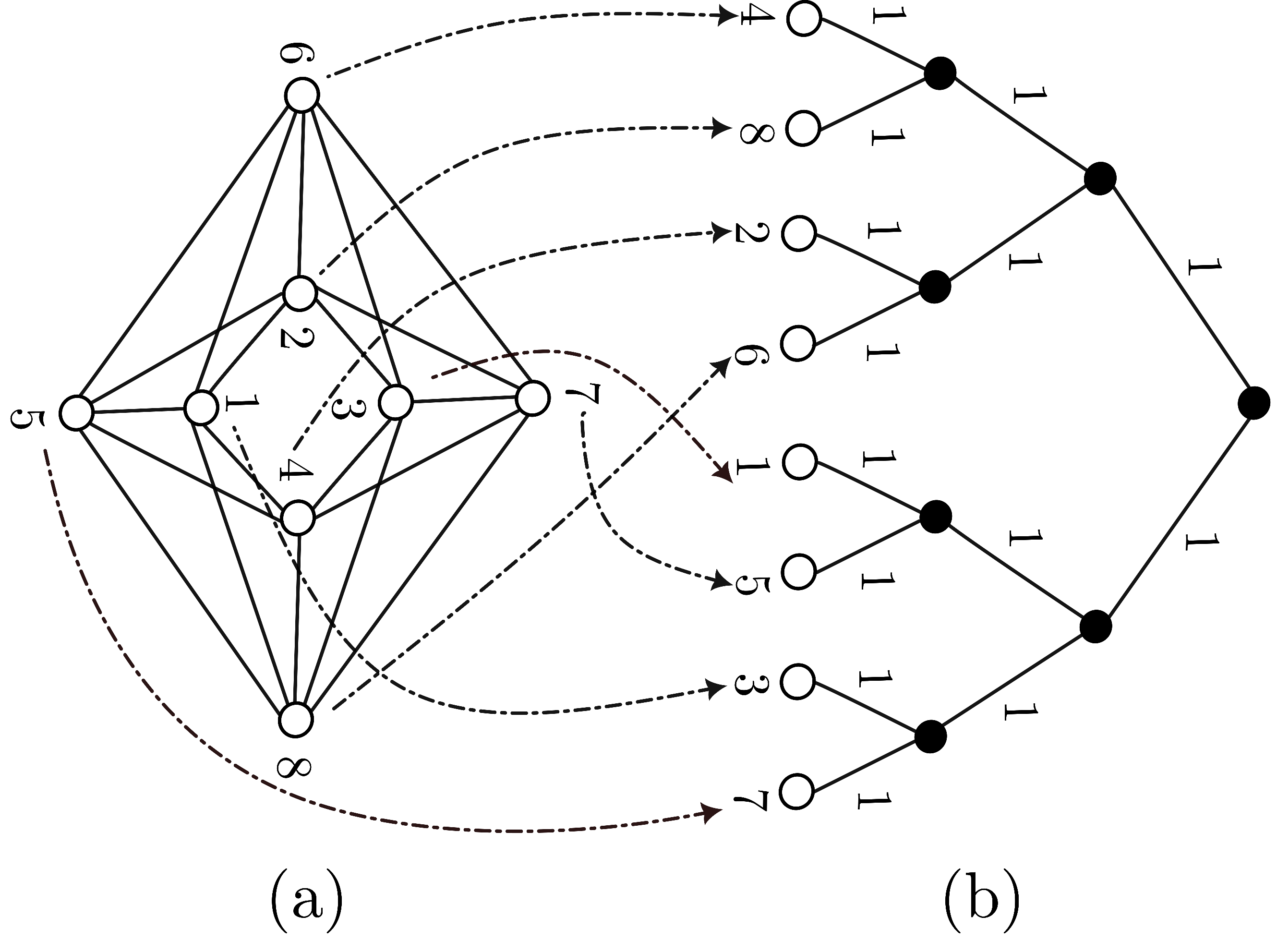}
        \caption{(a) A non-PCG proved by Azam et al.~\cite{azam2018enumerating} which is a 
        2-IPCG identified by Calamoneri et al.~\cite{calamoneri2022all}  with its witness tree in~(b) and corresponding intervals $I_{1}=[1,3], I_{2}=[5,6]$ }
        \label{fig:2-IPCG}
    \end{figure} 
    
McKay and Piperno~\cite{mckay2012nauty} developed a software called {\tt NAUTY} to compute the CR of colored graphs. The input and output of {\tt NAUTY} are given below. \\
\textbf{Input: } A basic-labeled graph $G$ with $n$ vertices and a full $k$-vertex coloring $\pi$ of $G$.\\
\textbf{Output:} 
\begin{enumerate}[noitemsep]
\item[(i)] A canonical representation $C(G, \pi)$  of the graph $(G, \pi)$ with an isomorphism between the graphs $(G,\pi)$ and $C(G,\pi)$; and 
\item[(ii)] a hash value $f(C(G,\pi)) \in [0, 2^{32} - 1]^3$ which is an ordered triplet of integers each in the range $[0, 2^{32} - 1]$.
\end{enumerate}

\subsection{Some known results}
In this section, we review some known results related to PCGs  and 2-IPCGs.

Azam et al.~\cite{azam2018enumerating} enumerated all PCGs with eight vertices.
\begin{lemma}~\cite{azam2018enumerating}
There exist exactly seven non-PCGs with eight vertices.
\end{lemma}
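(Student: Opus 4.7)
The plan is to treat the statement as an enumeration result and reduce the continuous/combinatorial search for a PCG witness to a large but finite family of linear programming feasibility problems. First, I would enumerate the $12{,}346$ pairwise non-isomorphic graphs on eight vertices using a canonical form such as the one produced by \texttt{NAUTY}, so that each isomorphism class is visited exactly once. For each candidate graph $G$, the task is then to decide whether there exist a tree $T$ with eight leaves, a bijection $\sigma:L(T)\to V(G)$, edge weights $w:E(T)\to\mathbb{R}^{+}$, and a single interval $[d_{\min},d_{\max}]$ certifying $G$ as a PCG.

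The key observation is that, once the combinatorial part $(T,\sigma)$ is fixed, the conditions
\begin{equation*}
d_{\min}\leq d_{T,w}(\sigma^{-1}(u),\sigma^{-1}(v))\leq d_{\max}\qquad \text{for every } uv\in E(G),
\end{equation*}
are linear in the unknowns $w(e)$, $d_{\min}$ and $d_{\max}$. The only non-linear ingredient is the disjunctive requirement, for each non-edge $uv$, that the corresponding leaf distance lies strictly below $d_{\min}$ or strictly above $d_{\max}$; I would handle this by branching on the sign pattern over non-edges, producing one LP per assignment. Then $G$ is a PCG if and only if at least one combination of tree, bijection, and non-edge sign pattern yields a feasible LP in $w\geq 0$. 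To shrink the tree search space I would restrict to binary tree topologies --- a restriction justified by a separate structural argument, and in fact generalized later in this paper to the multi-interval setting --- and further prune by the automorphisms of both $G$ and $T$ to collapse equivalent bijections.

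The main obstacle is the combinatorial explosion: with up to $8!$ bijections and $2^{|\overline{E(G)}|}$ sign patterns per tree, a naive enumeration is hopeless. Effective pruning --- symmetry reduction on $(T,G)$ via canonical labels, early infeasibility cuts as soon as the partial constraints become inconsistent, and reuse of LP solutions across related sign patterns --- will be essential to carry the computation through on the full list of $12{,}346$ graphs. Once the procedure certifies, for each graph, either a feasible LP (hence a PCG witness) or the infeasibility of every sub-LP (hence a proof of non-PCG status), the exact count of non-PCGs is obtained; the assertion is that this count equals seven, and the seven infeasible graphs can then be exhibited explicitly as the negative certificates completing the lemma.
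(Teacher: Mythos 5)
The paper does not actually prove this lemma; it is quoted verbatim from Azam et al.~\cite{azam2018enumerating}, and your proposal is essentially a faithful reconstruction of the LP-based enumeration method used in that cited work (fix a witness tree topology, a leaf-to-vertex bijection, and a sign pattern on the non-edges, then test linear feasibility in the edge weights and interval endpoints, with symmetry pruning to make the search tractable). The only caveats are that such an argument becomes a proof only once the finite computation is actually executed to completion, and that the strict inequalities (positive weights, non-edge distances strictly outside the closed interval) must be handled rigorously in the LP formulation rather than waved at --- both points that the cited reference addresses explicitly.
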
 
Azam et al.~\cite{azam2022enumeration} improved the enumeration method and generated all minimal non-PCGs with nine vertices.
\begin{lemma}
\cite{azam2022enumeration}
There exist exactly 1,494 minimal non-PCGs with nine vertices.
\end{lemma}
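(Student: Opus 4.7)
The plan is to establish the count by an exhaustive computer-assisted search through all mutually non-isomorphic graphs on nine vertices, testing each one for the PCG property and then filtering for minimality. By Table~\ref{tab:summaryPCGs} there are $274{,}668$ such graphs, which is large but tractable provided the PCG test can be executed quickly and isomorphism is handled cleanly. First I would generate the complete list of $9$-vertex graphs, using a canonical representation function such as the one provided by {\tt NAUTY}, so that each isomorphism class is visited exactly once.

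For each candidate graph $G$ I would then decide whether $G$ is a PCG by adapting the linear-programming framework of Azam et al.~\cite{azam2018enumerating,azam2021method}. The idea is to enumerate candidate witness trees $T$ with nine leaves and candidate bijections $\sigma:L(T)\to V(G)$; for each pair $(T,\sigma)$ one sets up a linear program in the edge-weight variables of $T$ together with interval endpoints $d_{\min},d_{\max}$, with inequalities forcing $d_{T,w}(\sigma^{-1}(u),\sigma^{-1}(v))\in[d_{\min},d_{\max}]$ for every edge $uv\in E(G)$ and the opposite strict inequality for every non-edge. If some $(T,\sigma)$ yields a feasible LP then $G$ is a PCG; otherwise it is a non-PCG. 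Restricting $T$ to binary trees (justified by the structural result developed later in this paper) and pruning isomorphic $(T,\sigma)$ configurations via colored canonical forms are the key reductions that make this step feasible.

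The minimality check is cheap thanks to the previous lemma: for every non-PCG $G$ detected, compute the nine induced subgraphs obtained by deleting each vertex of $G$ in turn, and test whether any of them is isomorphic to one of the seven known non-PCGs on eight vertices using {\tt NAUTY} canonical forms and hash values. If none of the nine induced subgraphs is a non-PCG, then $G$ is declared minimal; otherwise $G$ is discarded. Summing the survivors over the full list of $9$-vertex graphs should yield the claimed total of $1{,}494$.

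The hard part will be the PCG test itself, because both the number of binary trees with nine leaves and the number of leaf-bijections grow combinatorially, and each pair triggers an LP solve. The main obstacles are therefore (i) pruning the tree/bijection search space aggressively, via binary-tree reduction and vertex-colored canonicalization that collapses symmetric $(T,\sigma)$ pairs, (ii) ordering the search so that a feasible LP witness is found early when $G$ is in fact a PCG, and (iii) certifying infeasibility globally when $G$ is not a PCG, which requires exhausting the reduced space rather than merely sampling it. Once these engineering issues are resolved, the enumeration reduces to running the pipeline on every non-isomorphic $9$-vertex graph and tallying those non-PCGs whose nine $8$-vertex induced subgraphs all lie outside the known list of seven non-PCGs.
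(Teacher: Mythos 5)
The paper offers no proof of this statement at all: it is quoted directly from \cite{azam2022enumeration}, so what you are really reconstructing is the method of that reference. Your high-level architecture (exhaustive search over the $274{,}668$ non-isomorphic $9$-vertex graphs, an LP-based PCG decision, and a minimality check against the seven $8$-vertex non-PCGs) is the right kind of argument, and the minimality check is sound because the PCG property is hereditary, so a non-PCG is minimal exactly when all of its one-vertex-deleted subgraphs are PCGs.

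The genuine gap is in the step you treat as routine: ``for each pair $(T,\sigma)$ one sets up a linear program \ldots{} with the opposite strict inequality for every non-edge.'' The negation of $d_{\min}\le d_{T,w}(x,y)\le d_{\max}$ is the disjunction $d_{T,w}(x,y)<d_{\min}$ \emph{or} $d_{T,w}(x,y)>d_{\max}$, which a single LP cannot encode; naively you would need one LP for each of the $2^{m}$ ways of assigning the $m$ non-edges to the ``below'' or ``above'' side (up to $m=36$ for nine vertices), for every tree and every bijection, and certifying that $G$ is \emph{not} a PCG requires exhausting all of them. This disjunctive blow-up, together with the strictness of the inequalities, is precisely the obstacle that \cite{azam2021method,azam2022enumeration} are built to overcome --- they work in the generative direction (enumerating feasible configurations per tree and collecting the graphs they realize, rather than deciding each graph separately) and, crucially, they apply the cheap hereditary filter \emph{first}, discarding every $9$-vertex graph containing one of the seven $8$-vertex non-PCGs before any LP is solved, so that the expensive test is only run on the small residue of candidates. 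Your pipeline inverts this order and runs the expensive (and, as formulated, incompletely specified) test on all $274{,}668$ graphs, so as written it neither terminates in practice nor correctly certifies non-PCG-ness; the count $1{,}494$ would not be obtained without repairing the disjunction and reordering the filters.
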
 
Calamoneri et al.~\cite{calamoneri2022all} proved that seven non-PCGs with eight vertices reported by Azam et al.~\cite{azam2018enumerating} are 2-IPCGs.
\begin{lemma}~\cite{calamoneri2022all}
All graphs with at most eight vertices are 2-IPCGs.
\end{lemma}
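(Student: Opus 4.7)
The plan is to split the claim by vertex count and reduce it to the seven known hard cases. The key initial observation is that every PCG is automatically a 2-IPCG: given a PCG witnessed by $(T,w,\sigma)$ and interval $[d_{\min},d_{\max}]$, I can set $I_1=[d_{\min},d_{\max}]$ and choose $I_2=[M,M+1]$ for any $M$ strictly greater than $\max_{u,v\in L(T)} d_{T,w}(u,v)$. Then $I_1,I_2$ are disjoint, no leaf-pair distance lies in $I_2$, and the same tree, weights, and correspondence certify the graph as a 2-IPCG. So it suffices to handle non-PCGs.

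For $n\leq 7$, the result of Calamoneri et al.~\cite{calamoneri2013all}, quoted in the introduction, gives that every graph is a PCG, hence a 2-IPCG by the reduction above. For $n=8$, Lemma~1 (Azam et al.~\cite{azam2018enumerating}) tells us that of the $12{,}346$ graphs on eight vertices all but exactly seven are PCGs; these non-exceptional graphs are 2-IPCGs by the same reduction. Hence the entire content of the statement boils down to producing an explicit 2-IPCG witness for each of the seven specific 8-vertex non-PCGs enumerated in~\cite{azam2018enumerating}.

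For each such graph $G$, I would fix a candidate witness tree $T$ with $8$ leaves (by the binary-tree reduction promised later in the paper I can restrict to binary topologies, of which there are only finitely many), fix a bijection $\sigma\colon L(T)\to V(G)$, and then express each of the $\binom{8}{2}=28$ leaf-pair distances $d_{T,w}(\sigma(u),\sigma(v))$ as a linear form in the unknown edge weights $w(e)$. Guessing a partition of the $28$ pairs into three classes ``belongs to $I_1$'', ``belongs to $I_2$'', ``belongs to neither'' (consistent with the edges/non-edges of $G$), I obtain a linear feasibility program in the variables $\{w(e)\}_{e\in E(T)}\cup\{d^1_{\min},d^1_{\max},d^2_{\min},d^2_{\max}\}$ asserting positivity of weights, disjointness of the two intervals, and the appropriate inclusions/exclusions. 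A feasible solution yields a 2-IPCG representation of $G$. This is essentially the LP-based framework of Azam et al.~\cite{azam2018enumerating, azam2021method}, enriched by an extra interval and therefore strictly more permissive.

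The main obstacle is the last step: the search space of (tree, correspondence, pair-partition) triples is large, and for each of the seven target graphs one has to actually exhibit a triple for which the associated LP is feasible. The extra freedom provided by the second interval widens the feasibility region considerably compared with the PCG case, but there is no a priori guarantee that a small binary tree suffices; the real work is the finite but non-trivial enumeration performed in~\cite{calamoneri2022all}, where a suitable tree and interval pair is exhibited for each of the seven exceptional graphs. Combining this case analysis with the two reductions above completes the proof.
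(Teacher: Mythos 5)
Your proposal is correct and follows essentially the same route as the paper, which states this lemma as a quoted result of Calamoneri et al.~\cite{calamoneri2022all}: reduce via the observation that every PCG is a 2-IPCG (so all graphs on at most seven vertices and all but seven graphs on eight vertices are immediately covered by~\cite{calamoneri2013all} and~\cite{azam2018enumerating}), and then rely on the explicit witness trees and interval pairs exhibited in~\cite{calamoneri2022all} for the seven exceptional non-PCGs. Since the paper offers no independent proof beyond this citation, your decomposition matches its reasoning exactly, with the understanding that the seven concrete constructions are the substantive content being imported from the reference.
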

Calamoneri et al.~\cite{calamoneri2022all} proved that every 2-IPCG has a 
full binary tree as its witness tree.
\begin{lemma}~\cite{calamoneri2022all} If $G$ is a 2-IPCG, then there exists an edge-weighted full binary tree $(T, w)$ and two disjoint intervals $I_{1}$ and $I_{2}$ such that $G=$ 2-IPCG ($T, w, \sigma, I_{1}, I_{2}$). 
\end{lemma}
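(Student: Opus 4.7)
My approach is to start with an arbitrary witness tree $(T,w,\sigma,I_1,I_2)$ for $G$ and iteratively transform $T$ into a tree all of whose internal vertices have degree exactly three, while preserving the 2-IPCG property (after tiny adjustments of weights and intervals); rooting the resulting tree at any vertex then realizes it as a full binary tree.

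First I would preprocess the intervals so that no leaf-pair distance in $(T,w)$ lies on the boundary of $I_1$ or $I_2$. Since the set $D=\{d_{T,w}(x,y) : x,y\in L(T)\}$ is finite and $I_1,I_2$ are disjoint, I can perturb each endpoint of $I_1,I_2$ by an arbitrarily small amount so that endpoints avoid $D$ and every distance stays on the same side of every endpoint as before. This guarantees a uniform slack $\gamma>0$ between every leaf-pair distance and the nearest interval endpoint on its relevant side.

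Next I would apply two local operations to reduce internal degrees. If an internal vertex $v$ has degree $2$ with neighbors $a,b$, I suppress $v$ by replacing edges $va,vb$ with a single edge $ab$ of weight $w(va)+w(vb)$; all leaf-pair distances are preserved exactly. If $v$ has degree $d\ge 4$ with neighbors $u_1,\ldots,u_d$, I split $v$ by introducing a new vertex $v'$, moving the edges $vu_1,vu_2$ to become $v'u_1,v'u_2$ with their original weights, and adding a new edge $vv'$ of weight $\varepsilon>0$; this leaves $v'$ of degree $3$ and $v$ of degree $d-1$. A leaf-pair distance is unchanged unless its path passes through $v$ with one endpoint on the $\{u_1,u_2\}$ side and the other on the $\{u_3,\ldots,u_d\}$ side, in which case it increases by exactly $\varepsilon$. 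Iterating on $v$ brings its degree down to three, and iterating over all internal vertices yields a tree whose internal vertices all have degree three.

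Finally I would choose $\varepsilon$ small enough to control the cumulative perturbation. The total number of splits is bounded by $\sum_v \max(0,\deg(v)-3)\le 2|V(T)|$, and each leaf-pair distance increases by at most this many copies of $\varepsilon$. Choosing $\varepsilon<\gamma/(2|V(T)|)$ ensures that every distance originally in $I_i$ still lies in $I_i$ and every distance originally outside $I_1\cup I_2$ still lies outside, so the final tree together with the adjusted weights and the preprocessed intervals remains a witness tree for $G$ as a 2-IPCG. The main obstacle in this plan is the joint calibration of the interval preprocessing (to create slack $\gamma$) and of $\varepsilon$ (to keep the cumulative perturbation below that slack); once an a priori bound on the number of splits is fixed, $\varepsilon$ is forced and the argument closes, but this quantitative interaction is the only step that is not routine.
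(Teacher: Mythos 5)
Your argument is correct, and it reaches the stated conclusion; note, though, that the paper itself only cites this lemma from Calamoneri et al., and the closest in-paper proof is that of Lemma~\ref{lem:binary}, which performs the same two local surgeries (suppressing degree-$2$ internal vertices and expanding vertices of degree greater than three) but resolves the high-degree case differently. There, the newly created internal edges are given weight \emph{zero}, so every leaf-to-leaf distance is preserved exactly and the original intervals $I_1, I_2$ can be reused verbatim with no further analysis. You instead insist on strictly positive weights for the new edges, which forces the two-stage calibration you describe: first perturb the interval endpoints off the finite set $D$ of realized distances to manufacture a uniform slack $\gamma$, then bound the number of splitting operations by $\sum_v \max(0,\deg(v)-3) \le 2|V(T)|$ and take $\varepsilon < \gamma/(2|V(T)|)$ so that the monotone increase of each affected distance never crosses an endpoint. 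Both routes are valid; the paper's is shorter but tacitly relies on admitting weight-$0$ edges (despite writing $w\colon E(T)\to\mathbb{R}^{+}$), whereas yours is robust under a strictly positive weight convention at the cost of the $\varepsilon$--$\gamma$ bookkeeping, which you do close correctly. Your final remark that rooting the resulting binary tree at an arbitrary vertex yields a full binary tree is also fine under the paper's definition, since that definition constrains only the non-root internal vertices.
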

Calamoneri et al.~\cite{calamoneri2022all} proved that for each 2-IPCG there exist integer weights and intervals with integer end points.
\begin{lemma}~\cite{calamoneri2022all} 
Let $G$ be a 2-IPCG$(T, w, \sigma, I_1, I_2)$. 
Then there exists an integer  weight function $w'$, two disjoint intervals $I'_{1}$ and $I'_{2}$ with integer endpoints such that $G=\text{2-IPCG}(T, w', \sigma, I'_1, I'_2)$.
\end{lemma}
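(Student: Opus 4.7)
The plan is to keep the tree $T$ and the bijection $\sigma$ fixed, and to modify only the weight function and the interval endpoints. The key observation is that for any two leaves $\ell, \ell'$ of $T$, the distance $d_{T,w}(\ell, \ell')$ is a linear function of $w$ with coefficients in $\{0,1\}$, since it is the sum of weights along a unique path. Hence the conditions that $(w, I_1, I_2)$ realise the 2-IPCG $G$ can be rewritten as a finite system of linear (in)equalities in the unknowns $w(e)$ for $e \in E(T)$ and the four interval endpoints $d^1_{\min}, d^1_{\max}, d^2_{\min}, d^2_{\max}$, with all coefficients in $\{-1, 0, 1\}$.

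More precisely, using the given witness I would first partition the pairs of leaves into five classes according to the original assignment: edges whose distance falls in $I_1$, edges whose distance falls in $I_2$, and non-edges whose distance lies below $I_1$, strictly between $I_1$ and $I_2$, or above $I_2$ (assuming w.l.o.g. that $I_1$ precedes $I_2$). For a new candidate $(w', I'_1, I'_2)$ to realise $G$ with the same membership pattern, it suffices to impose the corresponding inequalities: non-strict ones such as $d^{1\prime}_{\min} \le d_{T,w'}(\cdot) \le d^{1\prime}_{\max}$ for the edge-classes, strict ones such as $d_{T,w'}(\cdot) < d^{1\prime}_{\min}$ for the non-edge classes, together with $w'(e) > 0$ and $d^{1\prime}_{\max} < d^{2\prime}_{\min}$.

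Next I would show that this feasible region has non-empty interior in $\mathbb{R}^{|E(T)|+4}$. Starting from the original solution, I would perturb the interval endpoints outward by a small common $\varepsilon > 0$, replacing each $(d^i_{\min}, d^i_{\max})$ by $(d^i_{\min} - \varepsilon, d^i_{\max} + \varepsilon)$. For $\varepsilon$ smaller than half the minimum of a finite set of positive slacks (the gap $d^2_{\min} - d^1_{\max}$, the gap between any non-edge distance and the nearest interval endpoint, and $d^1_{\min}$ itself), every constraint becomes strict while the partition of pairs is preserved. Since all constraints are now strict linear inequalities, the feasible set is open in $\mathbb{R}^{|E(T)|+4}$, and continuity of $d_{T,w}$ in $w$ lets us perturb the weights simultaneously without leaving it.

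Finally, the density of $\mathbb{Q}$ in $\mathbb{R}$ produces a rational point in this open region, and multiplying all unknowns by the least common multiple of their denominators yields integer weights and integer interval endpoints. Because $d_{T, Nw'}(\cdot) = N \cdot d_{T, w'}(\cdot)$ and the defining inequalities are homogeneous in $(w', d^{i\prime}_{\min}, d^{i\prime}_{\max})$, the scaled tuple still realises $G$ as a 2-IPCG with the same $T$ and $\sigma$. The main point requiring care is the slack estimate underlying the perturbation step, but since the number of leaf-pairs is finite the minimum slack is strictly positive, so a uniform $\varepsilon$ always exists and the argument goes through.
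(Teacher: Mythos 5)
The paper does not prove this lemma: it is imported verbatim from Calamoneri et al.\ \cite{calamoneri2022all} as a known result, so there is no in-paper argument to compare yours against. Your proposal is a correct, self-contained proof along the standard lines --- encode the witness as a finite system of linear inequalities in the weights and the four endpoints, widen the intervals by a small $\varepsilon$ bounded by half the minimum positive slack so that every constraint becomes strict, observe that the resulting feasible set is open and hence contains a rational point, and clear denominators using the degree-one homogeneity of all constraints. The only detail worth tightening is the case $d^1_{\min}=0$ (allowed, since the intervals consist of non-negative reals), where ``$d^1_{\min}$ itself'' is not a positive slack; this is fixed by first shrinking each $I_i$ to $[\min\{d_{T,w}(a,b)\in I_i\},\,\max\{d_{T,w}(a,b)\in I_i\}]$, whose endpoints are realized distances and hence strictly positive --- exactly the tightening the paper itself uses at the start of Section~\ref{sec:genPcg}.
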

Ahmed and Rahman~\cite{ahmed2017multi}  discussed the wheel graphs. 
\begin{lemma}~\cite{ahmed2017multi}
All wheel graphs with at least nine vertices are 2-IPCGs.
\end{lemma}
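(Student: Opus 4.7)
The plan is to prove this by an explicit construction. For each $n \geq 9$, I would exhibit an edge-weighted binary witness tree $(T, w)$, a bijection $\sigma$ between the leaves of $T$ and the vertices of the wheel $W_n$, and two disjoint intervals $I_1, I_2$ that together realize $W_n$ as a 2-IPCG. The natural candidate for $T$ is a caterpillar-like binary tree whose $n$ leaves are split into one distinguished leaf (playing the role of the hub $h$) and $n-1$ leaves laid out along the spine in cyclic order to play the roles of the rim vertices $r_1, \ldots, r_{n-1}$. This topology is designed so that the tree distances among the $r_i$'s grow monotonically with linear separation along the spine, while the hub-to-rim distances all share a common long segment and therefore cluster in a narrow range.

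The weight assignment would proceed in two layers. First, place small uniform weights on the rim pendant edges and the spine edges so that $d_{T,w}(r_i,r_{i+1})$ is the same small value $d_1$ for every consecutive pair $i = 1, \ldots, n-2$, and set $I_1$ to be a small interval around $d_1$. Second, attach the hub via a comparatively long pendant so that the hub-to-rim distances all fall in a higher range around some value $d_2$, and tune the spine weights so that the wrap-around distance $d_{T,w}(r_1,r_{n-1})$ also lands in this range; set $I_2$ to be an interval containing these values and disjoint from $I_1$. Then I would verify that (i) every edge of $W_n$ yields a distance in $I_1\cup I_2$, and (ii) every non-edge of $W_n$ (rim pairs $r_i, r_j$ with $2\le|i-j|\le n-3$) yields a distance that lies strictly between $d_1$ and the lower end of $I_2$, hence outside $I_1\cup I_2$.

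The hard part will be the hub's adjacency to all $n-1$ rim leaves: this single vertex forces $n-1$ distinct tree distances to simultaneously land in just two intervals while not accidentally covering any of the $\Theta(n^2)$ intermediate rim-rim distances that correspond to non-edges of $W_n$. Making the hub-to-rim distances cluster tightly enough to fit inside $I_2$, and choosing $I_2$ narrowly enough that the rim pairs with intermediate spine separation remain in the forbidden gap between $I_1$ and $I_2$, is the delicate balancing step. The lower bound $n \geq 9$ appears to be needed precisely to create a long enough cycle that such a gap between the "adjacent-rim" interval $I_1$ and the "hub/wrap-around" interval $I_2$ can accommodate all of the non-adjacent rim distances without overlap; once this gap is established, the disjointness of $I_1$ and $I_2$ and the validity of the construction follow by a direct check.
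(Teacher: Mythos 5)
This lemma is stated in the paper only as a quoted known result, with a citation to Ahmed and Rahman; the paper contains no proof of it, so there is no in-paper argument to compare yours against. Your proposal follows the standard (and, in essence, the cited source's) route: a caterpillar witness tree with the $n-1$ rim vertices as pendant leaves along the spine and the hub as one additional leaf, one interval for consecutive rim pairs and a second interval for the hub edges together with the wrap-around rim edge. This plan is sound and can be completed: with uniform pendant weight $a$, uniform spine weight $s$, and hub pendant $H > a + (n-2)s$, the distances are $2a+ks$ for rim pairs at spine separation $k$ and at least $H+a$ for hub pairs, so $I_1=[2a+s,2a+s]$ and $I_2=[2a+(n-2)s,\,H+a+(n-2)s]$ realize $W_n$, since every non-edge distance $2a+ks$ with $2\le k\le n-3$ falls strictly between the two intervals. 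Two corrections, though. First, you misplace the difficulty: the hub's adjacency to all rim leaves is the \emph{easy} part, because $I_2$ may be made as wide as needed above the largest rim--rim distance without capturing any forbidden pair; no ``tight clustering'' of hub distances is required. Second, your explanation of the bound $n\ge 9$ is wrong: the construction works for all $n$, and the threshold nine is not a constraint of the construction but reflects the fact (noted in the paper's introduction via Baiocchi et al.) that wheels with at least nine vertices are precisely the ones that fail to be ordinary PCGs, which is what makes the two-interval statement worth singling out. Neither point invalidates your proof of the lemma as stated.
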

\section{Proposed method to enumerate binary trees}\label{sec:binary}
We prove a relationship between $k$-IPCGs and binary trees in Lemma~\ref{lem:binary} to reduce the tree search space. 
Furthermore, an enumeration algorithm to generate binary trees with a given number of leaves is proposed in this section. 

 \begin{lemma}\label{lem:binary}
A graph $G$ is a $k$-IPCG if and only if there exists a binary tree $T$ such that 
$G = k$-IPCG$(T,w, \sigma, I_1, \ldots, I_k)$.
\end{lemma}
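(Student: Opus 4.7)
The backward direction is immediate from the definition of a $k$-IPCG, so the content lies in the forward direction. My plan is, given an arbitrary witness $G = k\text{-IPCG}(T', w', \sigma, I_1, \ldots, I_k)$, to transform $T'$ into a binary tree by two local surgeries that preserve the realized graph while updating weights (and, if necessary, the intervals).

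The first surgery removes internal vertices of degree $2$. If $v$ is internal with neighbors $u_1, u_2$, I replace the path $u_1 v u_2$ by the single edge $u_1u_2$ of weight $w'(u_1v) + w'(vu_2)$. All leaf-leaf distances are preserved exactly, so the same weights, correspondence, and intervals continue to realize $G$. Iterating this step eliminates every degree-$2$ internal vertex.

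The second surgery splits an internal vertex $v$ of degree $d \geq 4$. Partitioning its neighbors into $\{v_1, v_2\}$ and $\{v_3, \ldots, v_d\}$, I replace $v$ by two new vertices $v^{(1)}, v^{(2)}$ attached respectively to the two parts and linked by a new edge of weight $\epsilon > 0$. Then $v^{(1)}$ has degree $3$ and $v^{(2)}$ has degree $d - 1$, and iterating the split on $v^{(2)}$ eventually makes every internal vertex have degree exactly $3$. Each leaf-leaf distance is either unchanged or increased by exactly $\epsilon$, depending on whether the new internal edge lies on the path joining them.

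The main obstacle is choosing $\epsilon$ and, if necessary, adjusting the intervals so that the realized graph is preserved at every split. Since there are only finitely many leaf pairs and finitely many interval endpoints, I define a positive tolerance $\delta$ as the smallest nonzero gap between any leaf-leaf distance and any interval endpoint, and then take $\epsilon < \delta$; this keeps every distance that is strictly inside, or strictly outside, its current interval classification on the correct side of all endpoints after the split. The delicate case is when a distance equals some upper endpoint $b_i$ and the corresponding pair's path crosses the new edge, since then the distance grows to $b_i + \epsilon$; here I additionally enlarge $I_i$ to $[a_i, b_i + \epsilon]$ and choose $\epsilon$ smaller than the gap between $b_i$ and the next larger leaf-leaf distance above $b_i$, so that no new adjacency is created. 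Applying these safeguards at each split and iterating to exhaustion yields a binary tree $T$ together with an updated weight function $w$ and intervals $J_1,\ldots,J_k$ such that $G = k\text{-IPCG}(T, w, \sigma, J_1, \ldots, J_k)$, completing the forward direction.
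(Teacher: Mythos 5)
Your proof is correct and follows the same two-surgery skeleton as the paper: first suppress internal vertices of degree $2$ by merging their two incident edges (identical to the paper's step), then repeatedly split a vertex of degree at least $4$ until every internal vertex has degree $3$. The difference is in how the split is made distance-safe. The paper gives the newly created internal edges weight $0$, so every leaf-to-leaf distance is preserved exactly and neither the remaining weights nor the intervals need to be adjusted; this makes the verification immediate, but it quietly relies on zero edge weights, which sits uneasily with the paper's own definition of an edge-weighted tree as a map $w:E(T)\to\mathbb{R}^{+}$. You instead insert a strictly positive weight $\epsilon$ and compensate with a perturbation argument: take $\epsilon$ below the minimal nonzero gap between leaf distances and interval endpoints, and stretch an upper endpoint $b_i$ to $b_i+\epsilon$ when an adjacent pair sits exactly on it and its path crosses the new edge. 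This costs you the case analysis, but it buys a proof that goes through under strictly positive weights, and your explicit two-vertex split with iteration on the residual vertex of degree $d-1$ is cleaner than the paper's description of the high-degree case. Two small things to make explicit: also bound $\epsilon$ by the gaps between endpoints of distinct intervals so that the enlarged $I_i$ stays disjoint from the others (your $\delta$ only controls gaps between distances and endpoints), and note that $\delta$ must be recomputed after each split since the distance multiset changes; neither affects correctness.
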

\begin{proof}
Let $(T, w)$ be an edge-weighted  binary tree and $I_{i}, i=1, \ldots k$ be $k$ disjoint intervals. 
Then the graph $k$-IPCG $(T,w, \sigma, I_1, \ldots, I_k)$ is a 
$k$-IPCG by the definition of IPCGs. 

Conversely, suppose that $G$ be a $k$-IPCG with tree $T$, weight $w$, bijection $\sigma$, and intervals  $I_1, \ldots, I_k$. 
If $T$ is a binary tree, then there is nothing to prove. 
If $T$ is not a binary tree, then $T$ will have at least one non-leaf vertex of degree smaller or greater than three. 
Suppose that $v \in V(T) \setminus L(T)$ such that ${\rm deg}(v) \neq 3$. 
If ${\rm deg}(v) < 3$, then ${\rm deg}(v)= 2$ since $v \notin L(T)$. 
Let $u, t \in V(T) \setminus \{v\}$ such that $uv, vt \in E(T)$, and 
$w(uv)$ and $w(vt)$ are the corresponding weights. 
Then delete the vertex $v$ and edges $uv$ and $vt$, and introduce a new edge $ut$ with $w(ut)= w(uv)+w(vt)$ as depicted in  Fig.~\ref{fig:Binary_tree}$(a)$. 
By this transformation, we can remove all the vertices of degree 2 from $T$.
If deg $(v)>3$ and the neighbors of $v$ are $t$, $u_{1}, \ldots, u_{\ell}$ as shown in 
Fig.~\ref{fig:Binary_tree}$(b)$. 
Then by adding new vertices $u'_{i}$, $2 \leq i \leq \ell$ and edges $vu'_{i}$, $u_{i}u'_{i}$, $1 \leq i \leq \ell$ with $w(vu'_{i})=0$ and $w(u_{i}u'_{i})=w(vu_{i})$, 
$T$ can be converted into a binary tree.
Furthermore, it is easy to observe that these transformations will not change the distance between any pairs of leaves in $L(T)$, and therefore $G$ is a $k$-IPCG due to the modified tree $T$, weights $w$, bijection $\sigma$, and intervals $I_1, \ldots, I_k$.
\begin{figure}[h!] 
\centering
   \includegraphics[width=0.6\textwidth]{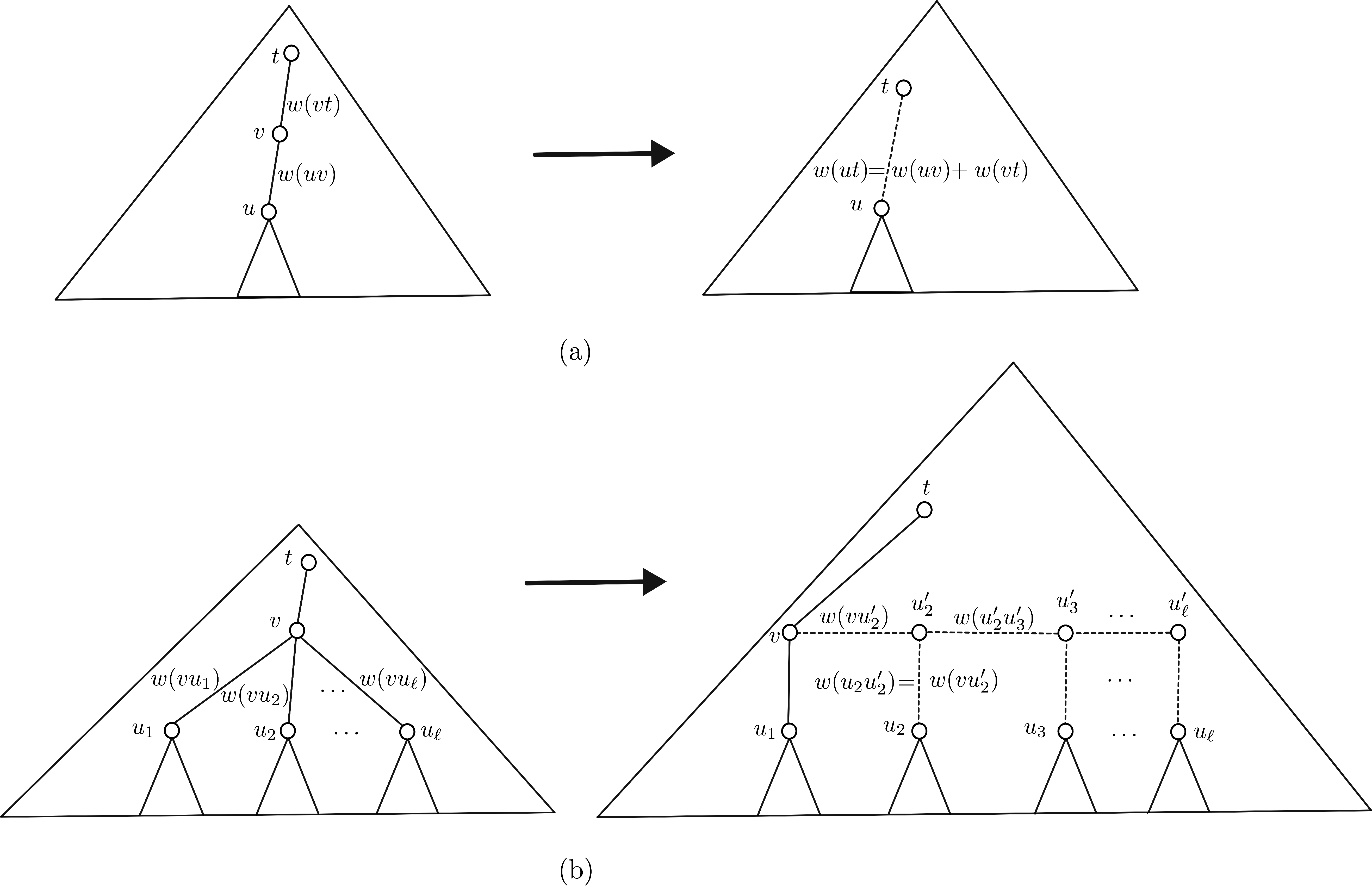}
  \caption{Transformations to convert a non-binary tree into a binary tree}
    \label{fig:Binary_tree}
    \end{figure}
\end{proof} 

Lemma~\ref{lem:binary} implies that every $k$-IPCG has a binary tree as its witness tree. 
Since the number of binary trees is much smaller than the number of trees with a given number of leaves, and therefore the search space for testing  if a given graph is a $k$-IPCG or not can be reduced by only considering binary trees. 
We next propose an algorithm to generate all non-isomorphic binary trees  for a given number of leaves. 
The idea of our algorithm is to generate a binary tree by combining some valid full binary trees at a root.  
For this purpose, we first generate full binary trees as follows. 

For any two rooted trees $T$ and $T'$ with roots $r_{T}$ and $r_{T'}$, we define 
$(r, T, T')$ to be a tree rooted at $r$, and trees $T$ and $T'$ are linked to $r$ through their roots $r_{T}$ and $r_{T'}$. 
An illustration of such a tree is given in Fig.~\ref{fig:rootedtree}, where the newly added edges are depicted by dashed lines.
\begin{figure}[h!] 
        \centering
         \includegraphics[width=0.3\textwidth]{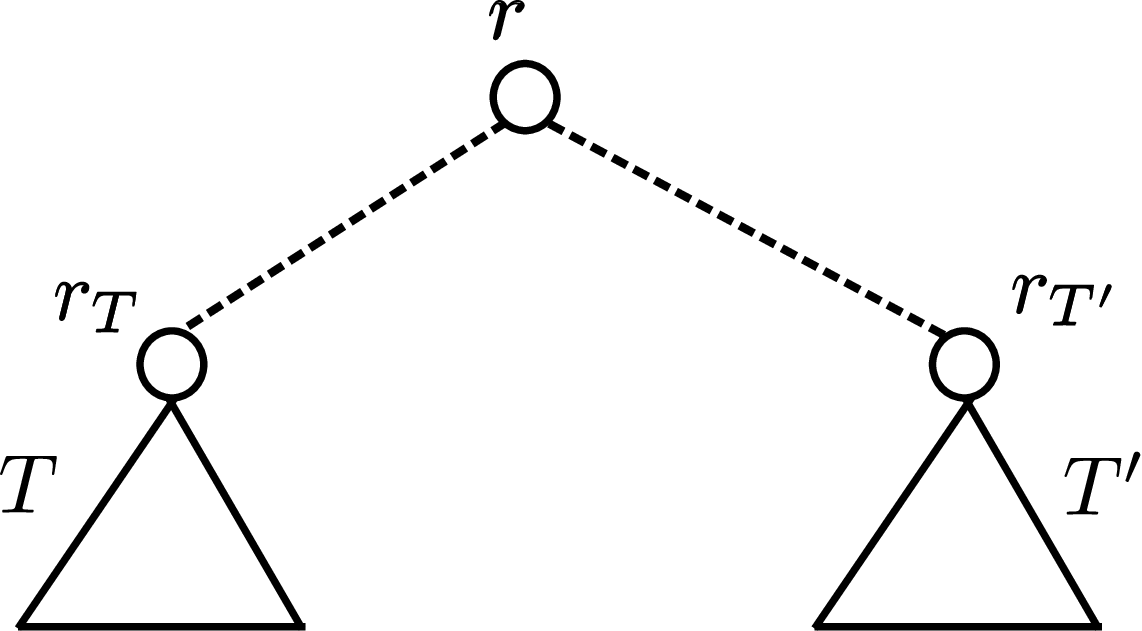}
        \caption{The rooted tree $(r, T, T')$ obtained from $T$ and $T'$}
        \label{fig:rootedtree}
    \end{figure}

Observe that $(r, T, T')$ is a full binary tree with $n$ leaves if $T$ and $T'$ are full binary trees with $n_{1}$ and  $n_{2}$ leaves, reps., and $n_{1}+ n_{2}=n$. Based on this observation, we develop Algorithm~\ref{alg:fullbin},
to generate 
the set $\mathcal{F}_{n}$ of all mutually non-isomorphic full binary trees with $n$ leaves. 
\begin{algorithm}[htbp]
\caption{GenFBTree($n$)} \label{alg:fullbin}
\begin{algorithmic}[1]
\item[\textbf{Input:}] An integer $n\geq2$.
\item[\textbf{Output:}] The family $\mathcal{F}_{n}$.
\State $\mathcal{F}:= \phi$;
\For{each sequence 
$(n_{1}, n_{2})$ with $n_{1} \geq n_{2}$ and $n_{1}+n_{2}=n$ }
\For{$T_{i} \in \mathcal{F}_{{n}_{1}}$}
\For{$T_{j} \in \mathcal{F}_{{n}_{2}}$}
\If{$n_{1}= n_{2}$} \hspace{1cm} /* only possible when $n$ is even */
\If {$i \geq j$}
\State $\mathcal{F}:=\mathcal{F} \cup \{(r, T_{i}, T_{j})\} $
\EndIf
\Else
\State $\mathcal{F}:=\mathcal{F} \cup \{(r, T_{i}, T_{j})\} $ 
\EndIf
\EndFor
\EndFor
\EndFor;
\State Return $\mathcal{F}$ as $\mathcal{F}_{n}.$
   \end{algorithmic}
   \end{algorithm}

The correctness of Algorithm~\ref{alg:fullbin} is established in Lemma~\ref{lem:cor_full}. 
\begin{lemma}\label{lem:cor_full}
Algorithm~\ref{alg:fullbin} outputs all non-isomorphic full binary trees for a given $n$.
\end{lemma}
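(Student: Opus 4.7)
I would prove the lemma by strong induction on $n$. The base case $n=2$ is immediate: the only allowed decomposition is $(n_1,n_2)=(1,1)$, and combining two single-leaf trees at a new root $r$ produces the unique full binary tree on two leaves. For the inductive step, I assume that for every $n' < n$ the set $\mathcal{F}_{n'}$ used inside the recursive calls of GenFBTree$(n)$ consists of exactly one representative per isomorphism class of full binary trees with $n'$ leaves. Under this hypothesis, the goal reduces to verifying three properties of the set $\mathcal{F}$ returned at line 15: soundness (every output is a full binary tree with $n$ leaves), completeness (every full binary tree with $n$ leaves is isomorphic to some output), and non-redundancy (no two outputs are isomorphic).

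\emph{Soundness} is straightforward from the construction in Fig.~\ref{fig:rootedtree}: each $(r, T_i, T_j)$ has exactly $n_1+n_2=n$ leaves, the new root $r$ has degree $2$, the former roots $r_{T_i}, r_{T_j}$ acquire a single additional incident edge and thereby reach degree $3$, while all remaining internal vertices of $T_i$ and $T_j$ are untouched. \emph{Completeness} follows by reversing the construction: given any full binary tree $T$ on $n$ leaves, removing its root yields two rooted subtrees whose leaf counts $n_1 \ge n_2$ sum to $n$ and which are themselves full binary trees; by the induction hypothesis these are isomorphic to some $T_i \in \mathcal{F}_{n_1}$ and $T_j \in \mathcal{F}_{n_2}$. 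When $n_1 \ne n_2$ the outer loop eventually reaches $(n_1,n_2)$ and the inner loops pick up $(T_i, T_j)$; when $n_1 = n_2$ one may swap the roles of the two subtrees, if necessary, to satisfy the tiebreaker $i \ge j$, so the algorithm still visits this pair.

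The main obstacle is \emph{non-redundancy}. The argument rests on the observation that any rooted isomorphism between $(r, T_i, T_j)$ and $(r, T_{i'}, T_{j'})$ must send the apex root to the apex root and therefore match the two child subtrees either in the same order or swapped. When $n_1 \ne n_2$ a swap is ruled out by the leaf counts, forcing $T_i \cong T_{i'}$ and $T_j \cong T_{j'}$, and the induction hypothesis then gives $i=i'$ and $j=j'$ since $\mathcal{F}_{n_1}$ and $\mathcal{F}_{n_2}$ contain one representative per class. When $n_1 = n_2$ a swap is possible, but the filter $i \ge j$ selects a unique representative from each unordered pair $\{T_i, T_j\}$, so two distinct iterations of the triple loop still yield non-isomorphic rooted trees. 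I expect this last bookkeeping, particularly the interaction between the swap symmetry and the $i \ge j$ filter in the symmetric case, to be the delicate step, since one must carefully check that the fixed indexing of $\mathcal{F}_{n_1}$ used by the algorithm is consistent across symmetric pairs and neither omits nor double-counts any isomorphism class.
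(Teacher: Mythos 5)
Your proposal is correct and follows essentially the same route as the paper's own (much terser) proof: decompose every full binary tree at its root into two full binary subtrees, observe that the loops over $(n_1,n_2)$ with $n_1\geq n_2$ and the tiebreaker $i\geq j$ cover every case exactly once, and use the fact that the root is the unique degree-2 vertex to pin down isomorphisms. Your explicit induction and the careful treatment of the swap symmetry in the $n_1=n_2$ case simply make rigorous what the paper asserts as ``easy to observe.''
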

\begin{proof}
It is easy to observe that each full binary tree can be viewed as $(r, T_{i}, T_{j})$,  where $T_{i} \in \mathcal{F}_{{n}_{1}}$ and $T_{j} \in \mathcal{F}_{{n}_{2}}$ with the condition that $n_{1}+n_{2}=n$. 
Therefore due to the for-loop at line~2 of Algorithm~\ref{alg:fullbin}, GenFBTree($n$) covers all the required trees. 
Furthermore, the conditions $n_{1} \geq n_{2}$ and ``$i \geq j  $ when $n_{1} = n_{2}$'' ensure that all the generated trees will be non-isomorphic since only the root has degree 2.
\end{proof}

Each tree can be uniquely viewed as a rooted tree by considering either its unicentroid or bicentroid as the root~\cite{jordan1869assemblages}. 
It is important to note that the bicentroid case applies when the number of vertices is even. 
Based on this observation, we propose an algorithm for enumerating binary trees as follows. 

We define $(r,T, T', T'')$ to be a rooted tree with the unicentroid obtained by attaching rooted trees $T$, $T'$, $T''$ to $r$ through their roots $r_{T}$, $r_{T'}$ and $r_{T''}$.  Similarly, $(e= r_{T}r_{T'}, T, T')$ is defined to be a rooted tree with a bicentroid obtained by connecting  $T$ with $T'$ by creating an edge $e$ between $r_{T}$ and $r_{T'}$, where $|V(T)| =|V(T')|$. 
Illustrations of these trees are given in Figs.~\ref{fig:rootedtrees}(a) and (b).
\begin{figure}[h!] 
        \centering
          \includegraphics[width=0.7\textwidth]{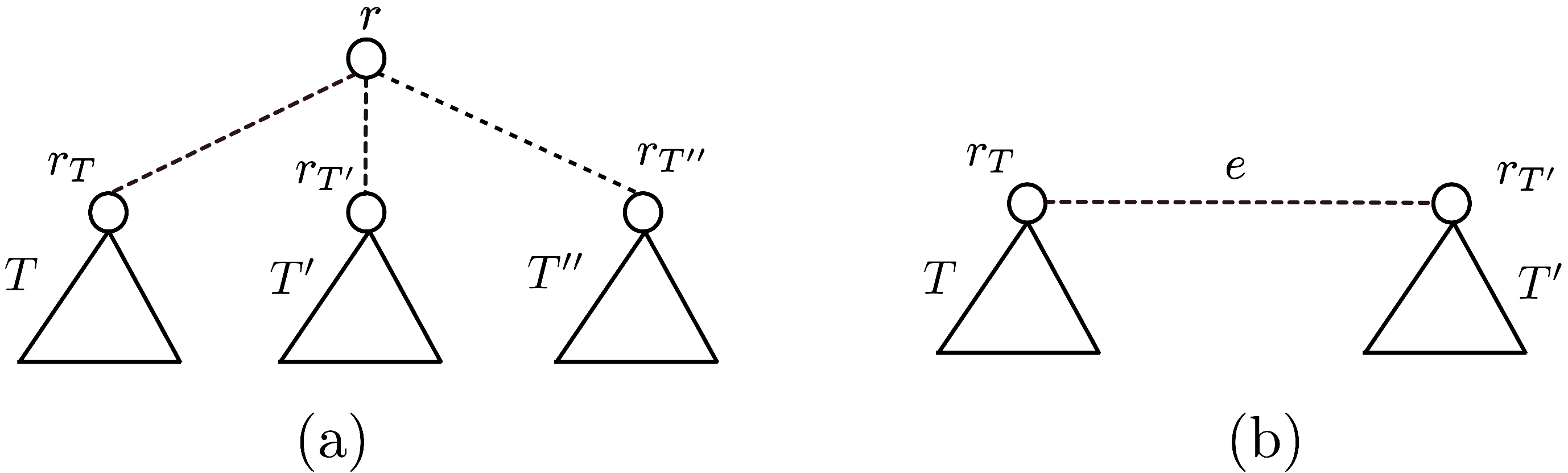}
        \caption{
        (a)~The rooted tree $(r,T, T', T'')$ with unicentroid obtained from $T$, $T'$ and $T''$, where the edges between roots of the trees are depicted by dashed lines; 
        (b)~The rooted tree $(e, T, T')$ with bicentriod obtained from $T$ and $T'$, where the edge between roots is depicted by a dashed line}
        \label{fig:rootedtrees}
    \end{figure}

Observe that $(r,T, T', T'')$ is a binary tree with $n$ leaves if $T$, $T'$, $T''$ are full binary trees with $n_{1}$, $n_{2}$, $n_{3}$  leaves, resp., and $n_{1}+ n_{2}+ n_{3}=n$. 
Similarly $(e, T, T')$ is a binary tree  with $n$ leaves if $T$, $T'$ are full binary trees with $n_1, n_2$ leaves, resp., and $n_{1}= n_{2}$.
From this observation, Algorithm~\ref{alg:bin} is designed to generate 
the set $\mathcal{T}_n$ of mutually non-isomorphic binary trees with $n$ leaves. 
\begin{algorithm}[htbp]
\caption{GenCan($\mathcal{G}$)}\label{alg:gencan}
\begin{algorithmic}[1] 
\item[\textbf{Input:}] A non-empty set $\mathcal{G}$ of basic-labeled graphs with $n$ vertices.
\item[\textbf{Output:}] Generate canonical representations and hash values of each graph in $\mathcal{G}$.
\State $\mathcal{C}:=\phi$;
\For {$G$ $\in$ $\mathcal{G}$ }
\State Compute $C(G, \pi_{0})$ and $h$($C(G, \pi_{0}))$ using {\tt NAUTY};
\State $\mathcal{C} := \mathcal{C}\cup \{C(G, \pi_{0}), \mathit{h}(C(G, \pi_{0}))\}$
\EndFor;
\State Return \(\mathcal{C}\) is the required set of canonical representation with hash values.
   \end{algorithmic}
   \end{algorithm}
 \pagebreak
  
\begin{algorithm}[htbp]
\caption{GenBTree$(n)$ 
} \label{alg:bin}
\begin{algorithmic}[1]
\item[\textbf{Input:}] An integer $n\geq2$.
\item[\textbf{Output:}] The family $\mathcal{T}_{n}$.
\State  $\mathcal{F}_{\ell}:=$ GenFBTree$(\ell)$, where $\ell\leq \lfloor \frac{n}{2} \rfloor$;
\State $\mathcal{T}:= \phi$;
\For{\textbf{each} sequence $(n_{1}, n_{2}, n_{3})$  with $n_{1}+n_{2}+n_{3}=n$,  $n_{1}, n_{2}, n_{3}  \geq 1 $, and $n_{1} \geq n_{2} \geq n_{3} $}
\For{$T_{i} \in \mathcal{F}_{{n}_{1}}$}
\For{$T_{j} \in \mathcal{F}_{{n}_{2}}$}
\For {$T_{k} \in \mathcal{F}_{{n}_{3}}$}
\If{$n_{1}= n_{2}$ or $n_{2}= n_{3}$ or $n_{1}= n_{2}= n_{3}$ } 
\If {``$i \geq j$, $n_{1}= n_{2}$'' or
  ``$j \geq k$, $n_{2}= n_{3}$'' or
   ``$i \geq j \geq k$, 
   \Statex \hspace{2.7cm}$n_{1}= n_{2}= n_{3}$''}
\State $\mathcal{T}:=\mathcal{T} \cup \{(r, T_{i}, T_{j}, T_{k})\} $
\EndIf
\Else
\State $\mathcal{T}:=\mathcal{T} \cup \{(r, T_{i}, T_{j}, T_{k})\} $ 
\EndIf
\EndFor
\EndFor
\EndFor
\EndFor;
\If{$n$ is even}
\For{\textbf{each} $T_{i}, T_{j} \in \mathcal{F}_{\ell}$ such that $i \geq j$ }
\State $\mathcal{T}:=\mathcal{T} \cup \{(e, T_{i}, T_{j})\} $
\EndFor
\EndIf;
\State $\mathcal{T'}:=$ GenCan($\mathcal{T}$);
\State Return $\mathcal{T'}$ as $\mathcal{T}_{n}$.
   \end{algorithmic}
   \end{algorithm}

  The correctness of Algorithm~\ref{alg:bin} is discussed in Lemma~\ref{lem:cor_bi}. 
   
  \begin{lemma}\label{lem:cor_bi}
 Algorithm~\ref{alg:bin} generates the set $\mathcal{T}_n$  of all non-isomorphic binary trees  for a given~$n$.
 \end{lemma}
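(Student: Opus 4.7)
The plan is to invoke Jordan's centroid theorem~\cite{jordan1869assemblages}: every tree has either a unique centroid vertex (unicentroid) or a unique centroid edge (bicentroid), and each component obtained after removing the centroid has at most half of the remaining vertices. I will apply this to a binary tree to argue that every binary tree with $n$ leaves fits exactly one of the two templates $(r, T_i, T_j, T_k)$ or $(e, T_i, T_j)$ enumerated by the algorithm, and then verify that the for-loops cover all such decompositions and that the final call to GenCan yields exactly one representative per isomorphism class.

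First I would verify the structural facts. A leaf cannot be a centroid when $n \geq 2$, so the centroid of a binary tree must be an internal vertex and, by definition of binary tree, has degree exactly $3$. In the unicentroid case, deleting this vertex produces three components; rooting each at the former neighbor of the centroid turns it into a full binary tree, since the new root has degree $2$ while all other internal vertices still have degree $3$. In the bicentroid case, both endpoints of the centroid edge are likewise internal vertices of degree $3$, and removing the edge splits the tree into two full binary trees of equal size. A short count (a full binary tree with $\ell$ leaves has $2\ell - 1$ vertices) combined with the centroid size inequality gives $\ell \leq \lfloor n/2 \rfloor$, which is precisely the precomputation range at line~1.

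Next I would establish completeness. For a binary tree $T^{\ast}$ with unicentroid, write $T^{\ast} = (r, T_1, T_2, T_3)$ with $T_h \in \mathcal{F}_{n_h}$ and $n_1 + n_2 + n_3 = n$. Reordering so that $n_1 \geq n_2 \geq n_3$, and within each block of equal sizes so that the indices are also in non-increasing order, places $T^{\ast}$ in some iteration of the triple for-loop at lines~3--13. Lemma~\ref{lem:cor_full} guarantees each $\mathcal{F}_{n_h}$ is a complete list of representatives, so $T^{\ast}$ is actually produced. The bicentroid case is handled analogously by lines~14--18, which is only invoked when $n$ is even (since both subtrees must then have $\ell = n/2$ leaves). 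Because a tree has exactly one centroid and the two templates correspond to different centroid types, the two blocks partition $\mathcal{T}_n$.

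Finally I would argue non-redundancy. The index conditions ``$i \geq j$ whenever $n_1 = n_2$'' and its bicentroid analogue strip away the symmetries coming from permuting identical-sized subtree slots; any residual redundancies are removed by GenCan at line~19, which by property~(ii) of the canonical representation function $C$ retains exactly one representative per isomorphism class, so $\mathcal{T}'$ equals $\mathcal{T}_n$. The hardest step is the first: cleanly pinning down that the centroid of a binary tree (or both endpoints of the bicentroid) is always an internal vertex of degree~$3$ and that the resulting components are full binary trees, since this structural fact underpins both the decomposition templates and the precomputation bound.
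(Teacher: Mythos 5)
Your proposal is correct and follows essentially the same route as the paper's own proof: decompose each binary tree at its Jordan unicentroid or bicentroid into full binary subtrees, invoke the completeness of the $\mathcal{F}_{n_h}$ from Lemma~\ref{lem:cor_full}, and handle duplicates via the index conditions together with the final GenCan call. The only difference is one of detail — you explicitly verify that the centroid is an internal vertex of degree three, that the resulting components are full binary trees, and that the centroid inequality yields the bound $\ell \leq \lfloor n/2 \rfloor$, all of which the paper asserts without proof.
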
 
 \begin{proof}
The condition $\ell \leq \lfloor \frac{n}{2} \rfloor$ ensures that the degree of the root of the resultant trees is exactly three. The tuples $(r, T_{i}, T_{j}, T_{k})$, where $T_{i} \in \mathcal{F}_{n_{1}}$, $T_{j} \in \mathcal{F}_{n_{2}}$, and $T_{k} \in \mathcal{F}_{n_{3}}$, with the condition that $n_{1}+ n_{2}+ n_{3}=n$, cover all the binary trees with a unicentroid. The conditions $n_{1} \geq n_{2} \geq n_{3}$ and ``$i \geq j$, $j \geq k$, or $i \geq j \geq k$ when $n_{1}= n_{2}$, $n_{2}= n_{3}$, or $n_{1}= n_{2}= n_{3}$" are used to avoid the generation of some isomorphic trees. The case of rooted trees with a bicentroid is addressed by the triplets $(e, T_{i}, T_{j})$ when $n$ is even. Hence, Algorithm~\ref{alg:bin} outputs all non-isomorphic binary trees.
 \end{proof}
  \section{Proposed method to generate $k$-IPCGs}  \label{sec:genPcg}
For a given binary tree, we wish to generate almost all $k$-IPCGs whose witness tree is a given binary tree $T$. 
 Our algorithm relies on the observation that for each $k$-IPCG there exist disjoint $k$ intervals with endpoints from the distances between leaves. More precisely if $G= k$-IPCG$(T,w, \sigma, I_{1}, \ldots, I_{k})$ then by setting 
 $I'_{i}=[x_i, y_i]$ such that 
 $x_i ={\rm min}\{d_{T,w}(a,b)|d_{T,w}(a,b) \in  I_i\}$ and 
 $y_i ={\rm max}\{d_{T,w}(a,b)|d_{T,w}(a,b) \in  I_i\}$, 
 we have $G=k$-IPCG$(T,w, \sigma, I'_{1}, \ldots, I'_{k})$. 
 This observation allows us to generate all $k$-IPCGs for a fixed binary tree $T$ by  randomly assigning $w$ and setting $k$ intervals with endpoints from the distances between the leaves. 
Selecting such intervals can be done easily by listing all distinct distances in strictly increasing order.
In these settings, the number of edges in the resultant $k$-IPCGs can be determined before their construction, and hence can be used to determine if $k$-IPCGs with that number of edges need to be generated or not. 
That is, if all $k$-IPCGs with a certain number of edges are already generated, then there is no need to generate $k$-IPCGs of the same size again. 
Finally, to efficiently apply isomorphic test, we partition all the graphs with respect to their number of edges.  
Note that the proposed method can generate $k$-IPCGs with at least three edges.
The main steps of the proposed generator are outlined below:
 \begin{enumerate} 
 \item Randomly generate a weight sequence of length $2n-3$ as there are $2n-3$ edges in a binary tree with $n$ leaves.
 \item For each tree $T \in \mathcal{T}_n$ assign weights to the edges.
 \item Compute the distances between each pair of leaves in the binary trees $T$.
 \item Select endpoints of $k$ intervals from the distinct distances.
 \item Construct $k$-IPCGs by assuming identity bijection.
 \item Finally test if the generated $k$-IPCG is a new $k$-IPCG or not by applying isomorphism test.
 \end{enumerate}
 
An algorithmic description of the proposed method is given in the Algorithm~\ref{alg:genkpcg}. 
\begin{algorithm}[htbp]
\caption{Partition($\mathcal{G}$)} \label{alg:partition}
\begin{algorithmic}[1] 
\item[\textbf{Input:}] A set $\mathcal{G}$ of graphs with $n$ vertices.
\item[\textbf{Output:}] Partition of graphs in $\mathcal{G}$ with respect to the number of edges. 
\State $\text{num}_{i}:=0, i=1,2, \ldots, \binom{n}{2}$;
\For {$G$ $\in$ $\mathcal{G}$ }
\If {$\text{num}_{|E|}=0$}
\State $\mathcal{P}_{|E|}:=\phi$;   \hspace{1cm} /* Initialize $\mathcal{P}_{|E|}$*/
\State $\mathcal{P}_{|E|}:=\mathcal{P}_{|E|}\cup \{G\}$ ;
\State $\text{num}_{|E|}:=1$
\Else
 \State  $\mathcal{P}_{|E|}:=\mathcal{P}_{|E|}\cup \{G\}$
\EndIf
\EndFor;
\State Return $\mathcal{P}_{i}s$ as the required partition.
   \end{algorithmic}
   \end{algorithm}

\begin{algorithm}[htbp]
\caption{GetPathEdges$(T, s, t)$} \label{alg:path}
\begin{algorithmic}[1] 
\item[\textbf{Input:}] A tree $T$, and its two vertices $s$ and $t$.
\item[\textbf{Output:}]  All edges on the path between $s$ and $t$ in $T$.
   \State Path := $\emptyset$;
    \ForAll{$u \in V(T)$}
        \State visited[$u$] := false;   preds[$u$] := $u$
    \EndFor;
    
    \State $Q$ :=  Empty queue; enqueue$(Q, s)$;
    \While{$Q$ is not empty}
       \State $v :=$ dequeue$(Q)$;
        \If{visited[$v$] = false}
            \State visited[$v$] := true
        \EndIf;
        
        \State $N[v]$ := Set of neighbors of $v$;
        \ForAll{$u  \in N[v]$}
            \If{visited[$u$] = false}
            	\State preds[$u$] := $v$; enqueue $(Q, u)$
            \EndIf
        \EndFor
    \EndWhile;
    \State $ w:=t$; \hspace{1cm}  /* {backtracking} */
    \While{preds[$w$] $\neq$ $w$}
        \State Path := Path $\cup \{{\rm preds}[w] w\}$; $w :=$  preds[$w$]
    \EndWhile;
    \State Return Path as the set of edges on the path between $s$ and $t$. 
\end{algorithmic}
\end{algorithm}
\restoregeometry 

\begin{algorithm}[htbp]
\caption{$k$-IPCGGen($\mathcal{G} ,\mathcal{T}, k, \tau$) } \label{alg:genkpcg}
\begin{algorithmic}[1]
\item[\textbf{Input:}] 
A subset $\mathcal{G}$ of $\mathcal{G}_{n}$ with $n$ vertices possibly with basic vertex labeling;
a subset $\mathcal{T}$ of tree set $\mathcal{T}_{n}$  with $n$ leaves where $n\geq 3$, leaf edges are indexed from 1 to $n$, and internal edges are indexed from $n+1$ to $2n-3$;
an integer $k\geq 1$;
a weight range $[a_{1}, a_{2}]$ for leaf edges;
a weight range $[a_{3}, a_{4}]$ for internal edges; and 
elapse time $\tau$.
\item[\textbf{Output:}]  
Until the elapse time is $\tau$, generate $k$-IPCGs that are in $\mathcal{G}$ and has a  witness tree in $\mathcal{T}$
with weight ranges $[a_{1}, a_{2}]$ and $[a_{3}, a_{4}]$. 
\State $\{\mathcal{P}_{1},\ldots, \mathcal{P}_{\binom{n}{2}}\}:=$ Partition$(\mathcal{G})$;
\State $\mathcal{C}:=$ GenCan$(\mathcal{G})$;
\State $\{\mathcal{C}_{1},\ldots, \mathcal{C}_{\binom{n}{2}}\}:=$ Partition$(\mathcal{C})$;
\State  $\mathcal{M}:=\phi$ ; \hspace{1cm} /* set of generated $k$-IPCG */
\State PathEdges$:= \phi$; \hspace{1cm} /* To store all path edges */
\For{$T_{t} \in \mathcal{T}$}
\State $PE:=\phi$;
\For{$u, v \in L(T_{t})$ with $u<v$}
\State $P:=$ GetPathEdges$(u,v)$;
\State $PE:= PE \cup \{(P,u,v,t)\}$
\EndFor;
\State PathEdges $:=$ PathEdges $\cup \{(PE,t)\}$;
\EndFor;
\While{elapsed time $<\tau$}
\State $\mathcal{G^{*}}:= \phi$; \hspace{1cm} /* store the $k$-IPCG generated in this round */
\State Randomly generate $n$ and $n-3$ weights with $w[i] \in [a_{1}, a_{2}]$, 1$\leq i\leq n$, and 
\Statex \hspace{0.6cm}$w[i] \in [a_{3}, a_{4}]$, $n+1 \leq i \leq 2n-3$;
\For{$t=1, \ldots, |\mathcal{T}|$}
\State Regard $w_{i}$, 1$\leq i\leq n$ and $n+1 \leq i \leq 2n-3$ weights of leaf edges and 
\Statex \hspace{1.12cm}internal edges of $T_{t}$, respectively;
\For{$u, v \in \{1, \ldots, n\}$ with $u<v$} 
\State $d[u,v]:= \sum_{e \in (P,u,v,t)} w[e]$
\EndFor;
\State Find distinct distances $d_{i}$, 1$\leq i\leq \ell$ s.t. $d_{i} < d_{i+1}$, $\forall i$;
\State  $S_{i}:=|\{uv$ $|$ $d[u,v]= d_{i} \}|$, 1$\leq i\leq \ell$;
\State sum$[0]:=0$;
\For{$i=1, \ldots, \ell$ }
\State sum$[i]:=$ sum$[i-1] + S_{i}$
\EndFor;
\algstore{bkbreak}
   \end{algorithmic}
   \end{algorithm}
      \addtocounter{algorithm}{-1}
   \begin{algorithm}[h]
\caption{Continued }
\begin{algorithmic}[1]
\algrestore{bkbreak}

\For{each sequence $(d_{h^{1}}, d_{h^{2}}, \ldots d_{h^{2k}} ) $ over $d_{1}, d_{2}, \ldots, d_{\ell}$ 
such that $d_{h^1} \leq$ 
\Statex \hspace{1.2cm}$d_{h^2} \leq \ldots  \leq d_{h^{2k}}$  }
\State $I_j := [d_{h^j}- \frac{1}{2} , d_{h^{j+1}}+ \frac{1}{2}]$, $1\leq j \leq k$;
\State $p:=\sum_{j=1}^{k}$( sum$[h^{j+1}]-$sum$[h^j]$ ); \hspace{0.6cm} /* no. of edges in the $k$-IPCG */
\If{$\mathcal{C}_p \neq \phi$}
\State Construct $k$-IPCG $G:=k$-IPCG$(T_{t}, w, \sigma, I_1, \ldots, I_k )$, where 
$\sigma$ is the 
\Statex \hspace{2.2cm}identity bijection 
\State $\mathcal{G}^{*}:= \mathcal{G}^{*} \cup \{G\}$
\EndIf
\EndFor
\EndFor;
\State $\mathcal{C}^{*}:=$ GenCan$(\mathcal{G}^{*})$;
\State $\{{\mathcal{C}_{1}^{*}}, \ldots, {\mathcal{C}_{n}^{*}} \}:= $ Partition$ (\mathcal{C}^{*})$;
\For{$q=1, \ldots, \binom{n}{2}$}
\For{each $(C, h) \in \mathcal{C}_{q}^{*}$}
\If{($C$, $h) \notin \mathcal{C}_{q} $}
\State $\mathcal{M} := \mathcal{M} \cup \{C, w,t, I_1, \ldots, I_k\}$ where
$C$ is the canonical representa-
\Statex \hspace{2.2cm}tion
\Statex \hspace{7em}  of $k$-IPCG$(T_{t}, w, \sigma, I_1, \ldots, I_k)$;
\State $ \mathcal{C}_{q}:=  \mathcal{C}_{q}\setminus\{( C, h)\}$
\EndIf
\EndFor
\EndFor
\EndWhile;
\State Return the set $\mathcal{M}$ of generated $k$-IPCGs, and the set 
$\mathcal{C}$ of canonical representations of the graphs in $\mathcal{G}$ that are not identified as $k$-IPCGs in the given time .
\end{algorithmic}
\end{algorithm}
\restoregeometry

 A flowchart with an example of the proposed algorithm for $k=2$ is given in Fig.~\ref{fig:kPCGs}.
 \begin{figure}[h!] 
        \centering
         \includegraphics[width=0.6\textwidth]{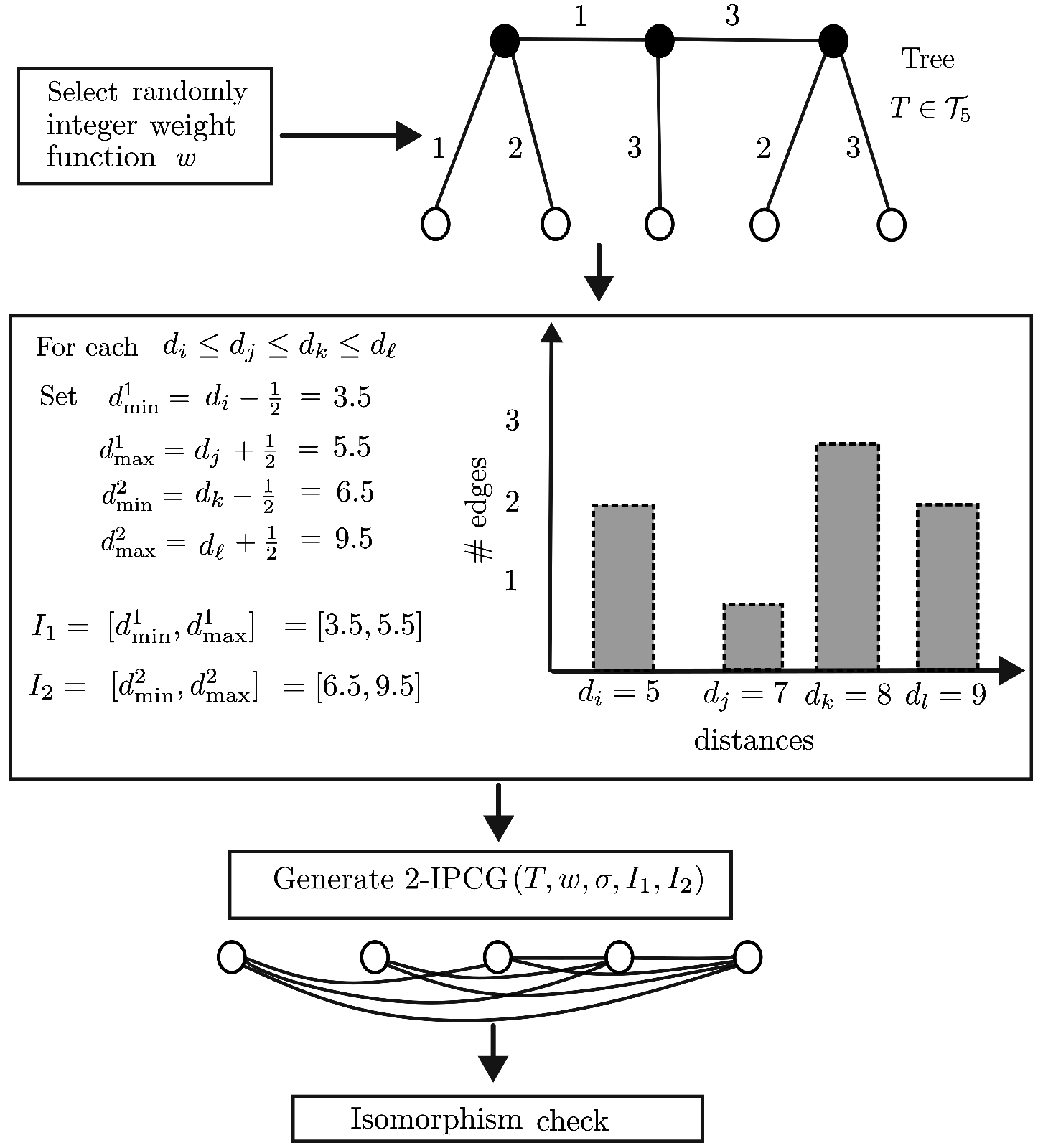}
        \caption{A flowchart of generating 2-IPCGs}
        \label{fig:kPCGs}
    \end{figure}
\section{Results and discussion} \label{sec: result}
The proposed algorithm is implemented and tested to generate 2-IPCGs with eight, nine and ten vertices by using a machine with Processor: 12th Gen, Core i7(1700 MHz) and Memory: 16 GB RAM.
The effect of different ranges of the weights, number of 2-IPCGs generated during each round, and the number of 2-IPCGs generated by each binary tree
are also analyzed in the following sections.
\subsection{Experimental results for graphs with eight vertices}
There are 12,346 graphs with eight vertices, and four binary trees in $\mathcal{T}_{8}$ that are enumerated by the proposed algorithm. 
These trees are illustrated in Fig.~\ref{fig:Binarytree8leaves}.
\begin{figure}[h!]
        \centering
      \includegraphics[width=.5\textwidth]{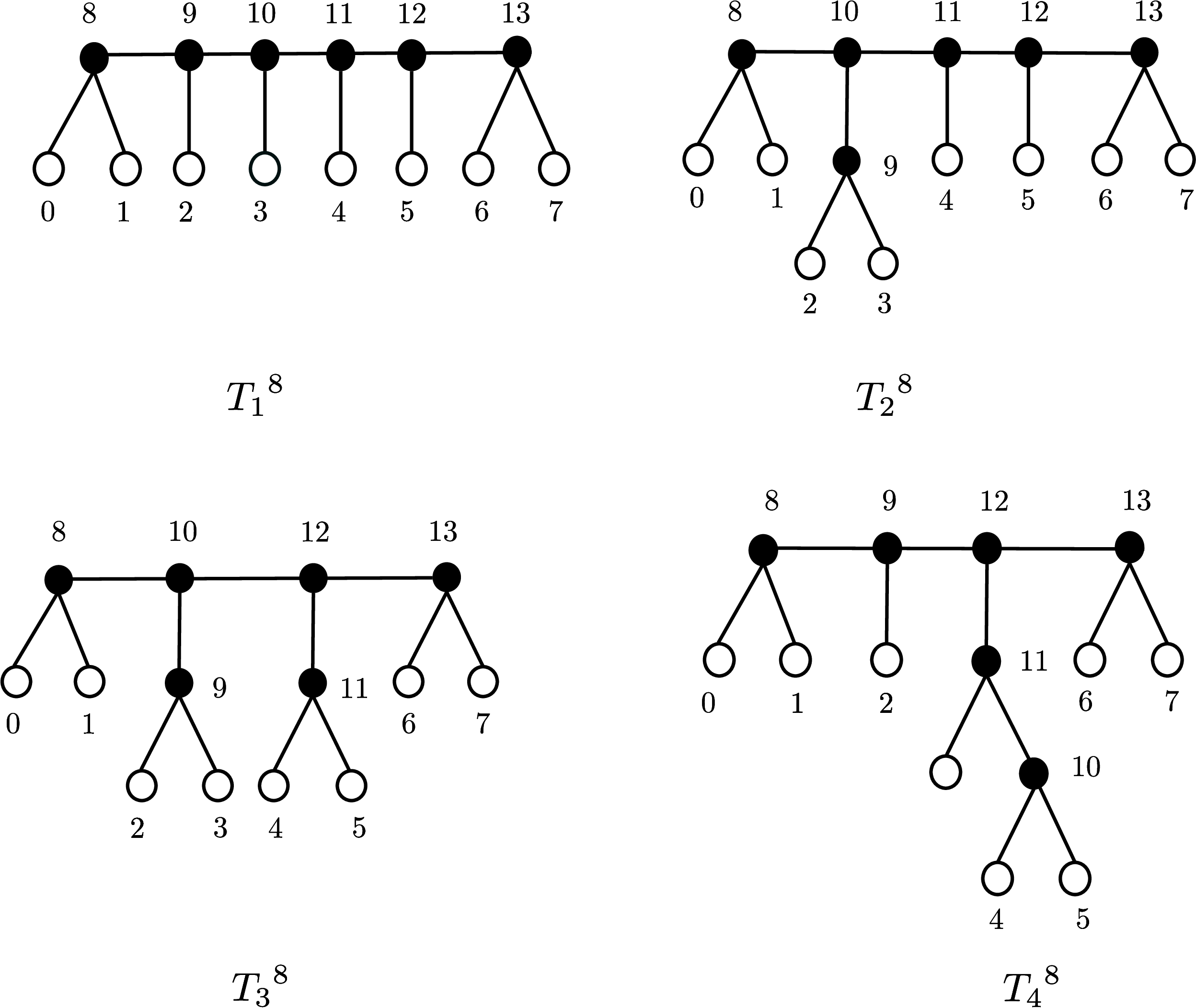}
        \caption{Binary trees with eight leaves}
        \label{fig:Binarytree8leaves}
   	 \end{figure}
   	 
2-IPCGs are generated by using all the trees in $\mathcal{T}_8$, setting two weight ranges for leaf edges $[a_{1}, a_{2}]=[1, 20]$ (resp., 
$[a_{1}, a_{2}]=[1, 50]$) and for internal edges $[a_{3}, a_{4}]=[1, 50]$
 (resp., $[a_{3}, a_{4}]=[1, 100]$).  
The enumeration results are summarized in Tables~\ref{table:computation_time_8_vertices1} and~\ref{table:time_12_minutes}.
All graphs with eight vertices are identified as 2-IPCGs in 700-815 seconds. 
From Table~\ref{table:computation_time_8_vertices1}, 
we can notice some effect of different ranges of weights, e.g., 
the number of rounds and the time to identify all 2-IPCGs are smaller for the  
range $[a_{1}, a_{2}]=[1, 50]$ and $[a_{3}, a_{4}]=[1, 100]$ as compared to the other range. 
The number of 2-IPCGs generated by each tree are listed in Table~\ref{table:time_12_minutes}.
From Table~\ref{table:time_12_minutes}, 
$T_{1}^{8}$ generated the largest number of 2-IPCGs followed by 
$T_{2}^{8}$, whereas $T_{3}^{8}$ and $T_{4}^{8}$ generated nearly the same number of 2-IPCGs for the ranges $[1, 20], [1, 50]$. 
However, the tree $T_{1}^{8}$ (resp., $T_{2}^{8}$) generated significantly large (resp., small) number of 2-IPCGs for the ranges $[1, 50], [1, 100]$.
Furthermore, the number of rounds increases over time, as the number of newly generated 2-IPCGs tends to decrease.

Calamoneri et al.~\cite{calamoneri2022all} also proved that all graphs with eight vertices are 2-IPCGs. This result confirms the correctness of our proposed generator.   
With these computational results, we have Theorem~\ref{the:8ver}.
\begin{theorem}\label{the:8ver}
All graphs with eight vertices are $2$-IPCGs. 
\end{theorem}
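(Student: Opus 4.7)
The plan is to establish the theorem by computational verification driven by Algorithm~\ref{alg:genkpcg}, with the tree search pruned via Lemma~\ref{lem:binary}. First I would invoke Lemma~\ref{lem:binary} so that it suffices to search only over binary witness trees, and then invoke Algorithm~\ref{alg:bin} to produce $\mathcal{T}_8$; since $|\mathcal{T}_8|=4$ (the four trees in Fig.~\ref{fig:Binarytree8leaves}), the tree component of the search collapses to these four candidates.

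Next I would precompute, via GenCan and {\tt NAUTY}, the canonical representations and hash values of every pairwise non-isomorphic 8-vertex graph, obtaining a fixed target set $\mathcal{C}$ of $12{,}346$ entries, which Partition splits by edge count. Then I would run $k$-IPCGGen with $\mathcal{G}=\mathcal{G}_8$, $\mathcal{T}=\mathcal{T}_8$, and $k=2$, using the two weight ranges $([1,20],[1,50])$ and $([1,50],[1,100])$. Each round samples a random integer weight assignment, enumerates every ordered $2k$-tuple of endpoints drawn from the multiset of leaf-to-leaf distances (which by the observation in Section~\ref{sec:genPcg} is sufficient to realize any 2-IPCG with that tree and weight), constructs the induced graph under the identity bijection, canonicalizes it, and removes any hit from $\mathcal{C}$ while recording it in $\mathcal{M}$. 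By property~(ii) of the canonical representation function, each such hit is a rigorous certificate that the corresponding 8-vertex graph is a 2-IPCG.

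The theorem then follows the moment $\mathcal{C}$ is exhausted. The runs reported in Section~\ref{sec: result} show this occurring within roughly $700$--$815$ seconds and are logged in Tables~\ref{table:computation_time_8_vertices1} and~\ref{table:time_12_minutes}; combining the exhaustion of $\mathcal{C}$ with Lemma~\ref{lem:binary} yields Theorem~\ref{the:8ver}.

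The main obstacle is completeness rather than soundness: although every canonical match produced by the algorithm is a bona fide witness, there is no a priori bound that guarantees the random weight search will eventually expose a witness for every graph in $\mathcal{G}_8$. To mitigate this I would rely on two safeguards. First, Calamoneri et al.~\cite{calamoneri2022all} showed that every 2-IPCG admits integer weights and integer-endpoint intervals, so the feasible space is discrete and hitting any fixed realization has positive probability under bounded-integer sampling. Second, running the experiment under two distinct weight ranges provides cross-coverage, and the independent proof of the eight-vertex case by Calamoneri et al.~\cite{calamoneri2022all} serves as an external consistency check confirming that the algorithm indeed terminates with $\mathcal{C}=\emptyset$.
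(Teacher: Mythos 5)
Your proposal matches the paper's own argument: both restrict the tree search to the four binary trees in $\mathcal{T}_8$ via Lemma~\ref{lem:binary}, run Algorithm~\ref{alg:genkpcg} with the same two weight-range pairs until the canonical-representation set of all $8$-vertex graphs is exhausted, and cite the independent result of Calamoneri et al.~\cite{calamoneri2022all} as an external consistency check. The only minor discrepancy is the target count ($12{,}344$ rather than $12{,}346$, since the generator only handles graphs with at least three edges), which does not affect the argument.
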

As a computational proof of Theorem~\ref{the:8ver}, we provide the witness trees and weights due to which a graph is a 2-IPCG  at 
\href{https://drive.google.com/drive/folders/1hGpopD12m2g6CC0Ps4a47wU7-jnMts-P?usp=drive_link}{evidence}.

\begin{table}[ht]
\centering
\begin{tabular}{|c|c|c|c|c|}
\hline
\multicolumn{5}{|c|}{$|\mathcal{G}|$ = 12,344} \\
\multicolumn{5}{|c|}{Weights $[a_1, a_2]$ = [1, 20] and $[a_3, a_4]$ = [1, 50]} \\
\hline
Sr.no. & Time (sec.) & \#Rounds & \#2-IPCGs & \#Graphs Left \\
\hline
1 & 100 & 2 & 11,426 & 918 \\
2 & 200 & 3 & 11,987 & 357 \\
3 & 300 & 5 & 12,233 & 111 \\
4 & 400 & 7 & 12,282 & 62 \\
5 & 500 & 9 & 12,307 & 37 \\
6 & 600 & 10 & 12,318 & 26 \\
7 & 700 & 12 & 12,324 & 20 \\
8 & 800 & 22 & 12,343 & 1 \\
9 & 815 & 23 & 12,344 & 0 \\
\hline
\multicolumn{5}{|c|}{Weights $[a_1, a_2]$ = [1, 50] and $[a_3, a_4]$ = [1, 100]} \\
\hline
1 & 100 & 2 & 11,810 & 534 \\
2 & 200 & 3 & 12,111 & 233 \\
3 & 300 & 4 & 12,235 & 109 \\
4 & 400 & 6 & 12,319 & 25 \\
5 & 500 & 8 & 12,333 & 11 \\
6 & 600 & 11 & 12,340 & 4 \\
7 & 700 & 12 & 12,344 & 0 \\
\hline
\end{tabular}
\caption{Computational results of 2-IPCGs with eight vertices}
\label{table:computation_time_8_vertices1}
\end{table}
\begin{table}[ht]
\centering
\begin{tabular}{|c|c|c|}
\hline
 \multicolumn{2}{|l|}{Weights  [1, 20], [1, 50] }& [1, 50], [1, 100] \\\hline
Tree & \multicolumn{2}{|c|}{\#2-IPCGs} \\
\hline
${T_{1}}^8$ & 4,227 & 6,467 \\
${T_{2}}^8$ & 3,817 & 1,319 \\
${T_{3}}^8$ & 2,136 & 2,449 \\
${T_{4}}^8$ & 2,164 & 2,109 \\
\hline
\end{tabular}
\caption{Number of 2-IPCGs generated by different trees}
\label{table:time_12_minutes}
\end{table}
\subsection{Experimental results for graphs with nine vertices}
There are 27,4668 graphs with nine vertices and 
six binary trees in $\mathcal{T}_{9}$ which are enumerated by the proposed algorithms. 
These trees are shown in Fig.~\ref{fig:Binarytree9leaves}.
\begin{figure}[h!]
        \centering
          \includegraphics[width=.8\textwidth]{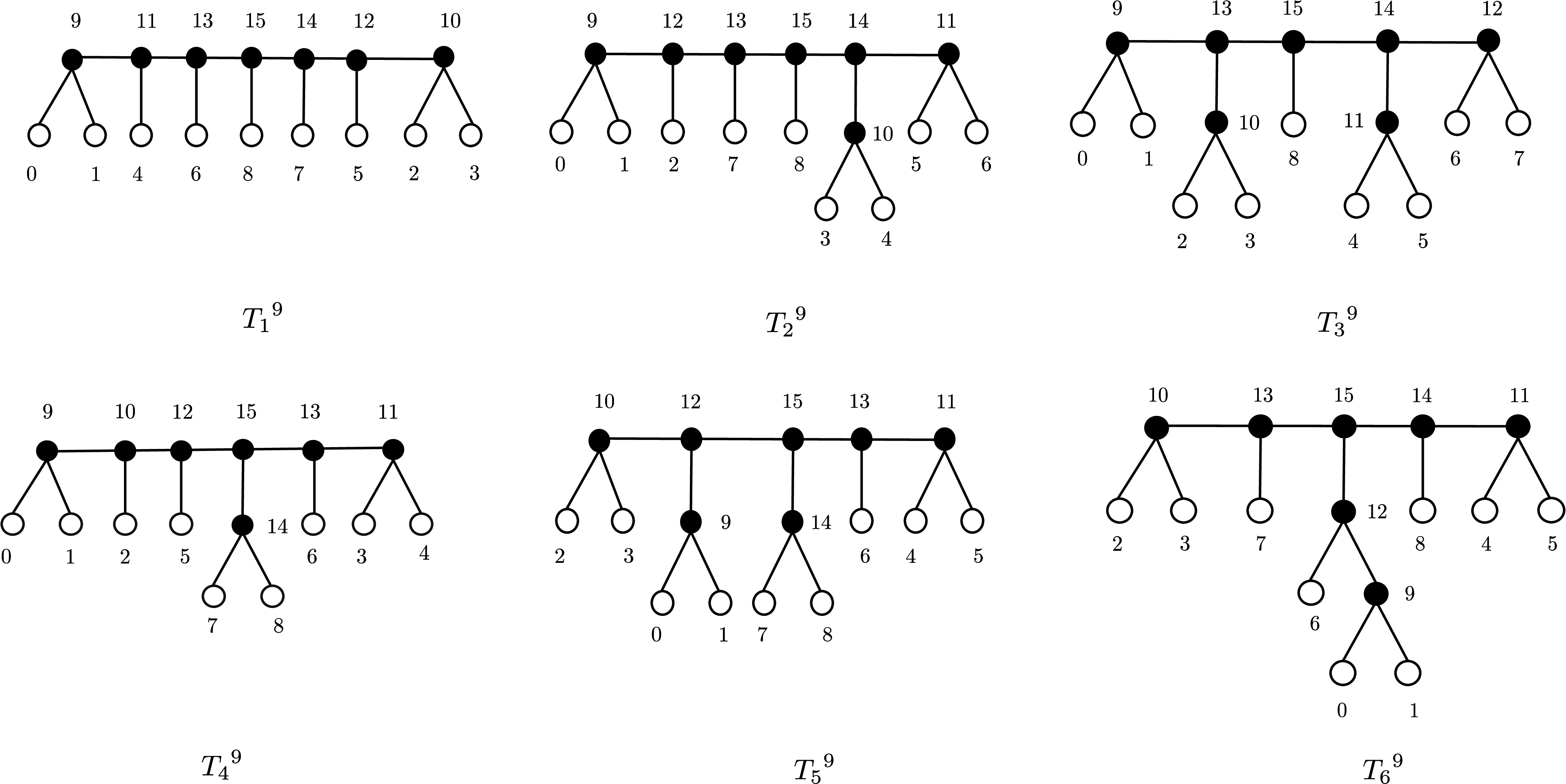}
        \caption{Binary trees with nine leaves}
        \label{fig:Binarytree9leaves}
    \end{figure}
    
    The ranges of the weights are the same as used for the eight vertices. 
    The proposed algorithm identified all  graphs with nine vertices as 2-IPCGs 
    in 54,000 seconds and 36,000 seconds. 
 Table~\ref{table:computation_time_9_vertices1} shows a noticeable effect of weights
 on the number of rounds and the computation time to identify all 2-IPCGs. 
 More precisely, the number of rounds and the computation time with weights 
  $[a_{1}, a_{2}]=[1, 20]$ and  $[a_{3}, a_{4}]=[1, 50]$ are significantly 
  lower than those of the weights 
 $[a_{1}, a_{2}]=[1, 50]$ and $[a_{3}, a_{4}]=[1, 100]$. 
 From Table~\ref{table:time_15_hours}, 
$T_{4}^{9}$ generated the largest number of 2-IPCGs for weights in $[1, 20], [1, 50]$ and $T_{1}^{9}$ generated the largest 2-IPCGs for weights in $[1, 50], [1, 100]$.
These computational results are summarized in Theorem~\ref{the:9ver}.
\begin{theorem}\label{the:9ver}
All graphs with nine vertices are $2$-IPCGs. 
\end{theorem}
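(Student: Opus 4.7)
The plan is to give a computational proof by applying the generator $k$-IPCGGen (Algorithm~\ref{alg:genkpcg}) with $k=2$ to the complete set $\mathcal{G}_9$ of $274{,}668$ pairwise non-isomorphic graphs with nine vertices. First I would enumerate the witness tree family $\mathcal{T}_9$ using GenBTree (Algorithm~\ref{alg:bin}); by Lemma~\ref{lem:binary} every 2-IPCG admits a binary tree as witness, so restricting to $\mathcal{T}_9$ is not a loss. Pre-computing, for each $T \in \mathcal{T}_9$, the leaf-to-leaf path edge sets via GetPathEdges keeps the inner distance loop cheap, and canonicalising the target set once with NAUTY lets us detect matches in near-constant time.

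Second, I would run rounds of random sampling under two weight regimes, $[a_1,a_2]=[1,20]$ with $[a_3,a_4]=[1,50]$, and $[a_1,a_2]=[1,50]$ with $[a_3,a_4]=[1,100]$, mirroring the schedule used for Theorem~\ref{the:8ver}. In each round a random weight vector of length $2\cdot 9 - 3 = 15$ is drawn; the $\binom{9}{2}=36$ pairwise leaf distances are computed; and every ordered choice of four endpoints from the sorted distinct distances is used to form two disjoint intervals $I_1$, $I_2$. The observation underpinning this scheme is that if $G=2\text{-IPCG}(T,w,\sigma,I_1,I_2)$, then each $I_j$ can be tightened to an interval whose endpoints are actually realised distances without changing the resulting graph, so restricting endpoints to the distance set loses no 2-IPCG supported by the sampled $(T,w)$.

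Third, each candidate graph produced is canonicalised with NAUTY and bucketed by edge count, so a fresh candidate is compared only against targets of the same size. Whenever a new match is found the witness $(T,w,I_1,I_2)$ is stored and the graph is removed from the outstanding list. The procedure terminates when the outstanding list is empty, and the stored witnesses form the proof; the data would be made available online in the same style as for Theorem~\ref{the:8ver}.

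The hard part is \emph{coverage}: the space of admissible $(w,I_1,I_2)$ triples is infinite, and random probing can miss the configurations needed by a handful of stubborn graphs. I expect the outliers to be graphs whose adjacency pattern forces several tight distance inequalities simultaneously, so that they are realisable only on a narrow slice of the weight distribution. Mitigations are (i) running both weight regimes, since they sample distinct distance structures, (ii) allowing the elapsed time $\tau$ to grow substantially beyond the eight-vertex budget, commensurate with the jump in graph count from $12{,}346$ to $274{,}668$, and (iii) exploiting the fact that a graph once witnessed is permanently removed, so rare successes accumulate monotonically. Once the residual count drops to zero, Theorem~\ref{the:9ver} follows.
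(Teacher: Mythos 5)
Your proposal is correct and follows essentially the same route as the paper: the authors run $k$-IPCGGen with $k=2$ on all graphs with nine vertices over the six trees of $\mathcal{T}_9$, using exactly the two weight regimes $[1,20]/[1,50]$ and $[1,50]/[1,100]$, until the residual set is empty (in 54{,}000 and 36{,}000 seconds respectively), and publish the witness data as the proof. The supporting ingredients you cite --- Lemma~\ref{lem:binary} to restrict to binary witness trees, tightening interval endpoints to realised distances, and NAUTY canonicalisation bucketed by edge count --- are precisely those used in the paper.
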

As a computational proof of Theorem~\ref{the:9ver}, we provide the witness trees and weights due to which a graph is a 2-IPCG  at  \href{https://drive.google.com/drive/folders/196NEvcqyZXB7v2jOTqY1f11qCMPMT508?usp=drive_link}{evidence}.

\begin{table}[ht]
\centering
\begin{tabular}{|c|c|c|c|c|}
\hline
\multicolumn{5}{|c|}{$|\mathcal{G}|$ = 274,666} \\
\multicolumn{5}{|c|}{Weights $[a_1, a_2]$ = [1, 20] and $[a_3, a_4]$ = [1, 50]} \\
\hline
Sr.no. & Time (sec.) & \#Rounds & \#2-IPCGs & \#Graphs Left \\
\hline
1 & 7,200 & 26 & 269,130 & 5,536 \\
2 & 14,400 & 62 & 274,218 & 448 \\
3 & 21,600 & 102 & 274,573  & 93 \\
4 & 28,800 & 151 & 274,642 & 24 \\
5 & 36,000 & 251 & 274,659 & 7 \\
6 & 50,400 & 740 & 274,665 & 1 \\
7 & 54,000 & 827 & 274,666 & 0 \\
\hline
\multicolumn{5}{|c|}{Weights $[a_1, a_2]$ = [1, 50] and $[a_3, a_4]$ = [1, 100]} \\
\hline
1 & 7,200 & 17 & 268,030 & 6,636 \\
2 & 14,400 & 45 & 274,413 & 253 \\
3 & 21,600 & 75 & 274,627 & 39 \\
4 & 28,800 & 106 & 274,652 & 14 \\
5 & 36,000 & 211 & 274,666 & 0 \\
\hline
\end{tabular}
\caption{Computational results of 2-IPCGs with nine vertices}
\label{table:computation_time_9_vertices1}
\end{table}
\begin{table}[ht]
\centering
\begin{tabular}{|c|c|c|}
\hline
 \multicolumn{2}{|l|}{Weights  [1, 20], [1, 50] }& [1, 50], [1, 100] \\\hline
Tree & \multicolumn{2}{|c|}{\#2-IPCGs} \\
\hline
${T_{1}}^9$ & 53,477 & 63,801 \\
${T_{2}}^9$ & 50,768 & 56,535 \\
${T_{3}}^9$ & 43,683 & 40,639 \\
${T_{4}}^9$ & 57,250 & 47,621 \\
${T_{5}}^9$ & 30,375 & 32,249 \\
${T_{6}}^9$ & 39,113 & 33,821 \\
\hline
\end{tabular}
\caption{Number of 2-IPCGs generated by different trees}
\label{table:time_15_hours}
\end{table}
\pagebreak
\subsection{Experimental results for with ten vertices}
There are 12,005,168 graphs with ten vertices, and  
11 binary trees with ten leaves generated by our proposed algorithm as shown in Fig.~\ref{fig:Binarytree10leaves}. 
\begin{figure}[h!]
        \centering
          \includegraphics[width=.8\textwidth]{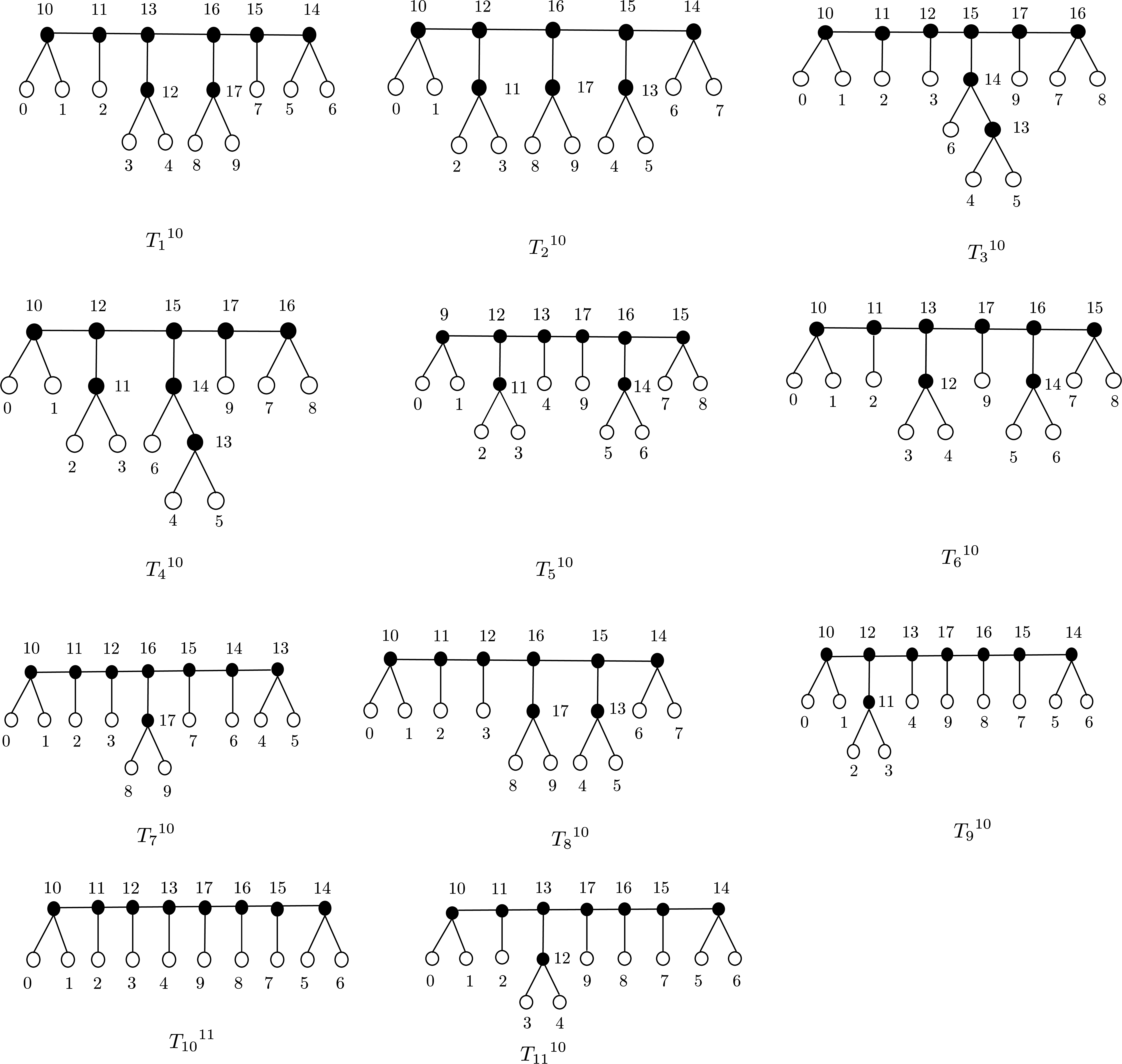}
        \caption{Binary trees with ten leaves}
        \label{fig:Binarytree10leaves}
\end{figure}

It is hard to deal with a large number of graphs on an ordinary machine, therefore we divided the graphs into eight sets, each containing 1,500,000 graphs, except one set which contains 1,505,168 graphs.
Similarly, different combinations of trees with different weights are used to efficiently generate 2-IPCGs. 
The details of these experiments are given in the appendix. 
The number of 2-IPCGs generated by each tree is listed in Table~\ref{table:10_vertices}, 
where $T_4^{10}$ generated the maximum number of 2-IPCGs. 
As a result, all graphs are identified as 2-IPCGs by the proposed algorithm in 96 days. 
These computational results are summarized in Theorem~\ref{the:10ver}.
\begin{table}[ht]
\centering
\begin{tabular}{|c|c|c|c|}
\hline
Tree & \#2-IPCGs & Tree & \#2-IPCGs \\
\hline
${T_{1}}^{10}$ & 2,111,825 & ${T_{7}}^{10}$ & 572,779 \\
${T_{2}}^{10}$ & 2,032,907 & ${T_{8}}^{10}$ & 545,852\\
${T_{3}}^{10}$ & 2,451,760 & ${T_{9}}^{10}$ & 350,927\\
${T_{4}}^{10}$ & 2,152,536 & ${T_{10}}^{10}$ & 345,728 \\
${T_{5}}^{10}$ & 575,431 & ${T_{11}}^{10}$ & 308,483\\
${T_{6}}^{10}$ & 556,938 & &\\
\hline
\end{tabular}
\caption{Number of 2-IPCGs generated by different trees}
\label{table:10_vertices}
\end{table}
\begin{theorem}\label{the:10ver}
All graphs with ten vertices are $2$-IPCGs. 
\end{theorem}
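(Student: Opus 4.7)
The plan is to establish Theorem~\ref{the:10ver} computationally by exhaustively verifying, for each of the $12{,}005{,}168$ graphs on ten vertices, that a witness binary tree together with a weight function and two disjoint intervals realizing it as a 2-IPCG can be found. The reduction to binary-tree witnesses is already justified by Lemma~\ref{lem:binary}, so the tree search can be restricted to $\mathcal{T}_{10}$, which Algorithm~\ref{alg:bin} enumerates as the eleven binary trees shown in Fig.~\ref{fig:Binarytree10leaves}. Given this, I would feed the full graph set $\mathcal{G}_{10}$ together with $\mathcal{T}_{10}$ and $k=2$ into Algorithm~\ref{alg:genkpcg} ($k$-IPCGGen), and let it run until every input graph has been identified as a 2-IPCG.

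Because $\mathcal{G}_{10}$ is too large to process in one pass on an ordinary machine, the second step is to partition it into eight batches, seven of $1{,}500{,}000$ graphs and one of $1{,}505{,}168$, and invoke Algorithm~\ref{alg:genkpcg} on each batch independently. Within a batch, the routine samples weights from the leaf-edge range $[a_1,a_2]$ and internal-edge range $[a_3,a_4]$, assigns them to each of the eleven trees, computes all pairwise leaf distances via Algorithm~\ref{alg:path}, forms candidate intervals $I_1,I_2$ with endpoints drawn from the sorted distinct distances, constructs the induced 2-IPCG under the identity bijection, and compares its \texttt{NAUTY} canonical representation (obtained via Algorithm~\ref{alg:gencan}) against the still-unidentified canonical representations, bucketed by edge count as in Algorithm~\ref{alg:partition}. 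Every new match is removed from the set of pending graphs and its witness $(T_t,w,I_1,I_2)$ is recorded.

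The principal obstacle is pure scale: twelve million target graphs, each potentially requiring many sampling rounds before a valid weight assignment and interval selection produces a match. To mitigate this, I would cycle Algorithm~\ref{alg:genkpcg} through several combinations of tree subsets and weight ranges, so that rare or structurally awkward graphs are eventually caught; the detailed schedule of these runs is deferred to the appendix. Termination is declared when, across all eight batches, every edge-count bucket $\mathcal{C}_q$ becomes empty; at that point the aggregated output set $\mathcal{M}$ contains explicit witnesses for every graph on ten vertices. These witnesses are archived as supplementary evidence, yielding a constructive proof of Theorem~\ref{the:10ver} that parallels the arguments used for Theorems~\ref{the:8ver} and~\ref{the:9ver}.
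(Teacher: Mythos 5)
Your proposal matches the paper's argument: Theorem~\ref{the:10ver} is established computationally by partitioning the $12{,}005{,}168$ graphs into eight batches (seven of $1{,}500{,}000$ and one of $1{,}505{,}168$), running Algorithm~\ref{alg:genkpcg} with the eleven binary trees of $\mathcal{T}_{10}$ over varying tree subsets and weight ranges until every graph is identified, and archiving the witnesses as evidence. This is essentially the same approach as the paper, so no further comparison is needed.
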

As a computational proof of Theorem~\ref{the:8ver}, we provide the witness trees and weights due to which a graph is a 2-IPCG  at
 \href{https://drive.google.com/drive/folders/11Kdyn3LJKLw9K1rGmzh5T6fqWWUjMDu_?usp=drive_link}{evidence}.
 \pagebreak
\section{Conclusion} \label{sec:conclusion}
We proposed a method to generate $k$-IPCGs. The key idea of this method is to generate almost all $k$-IPCGs for a given tree by randomly assigning weights and selecting intervals based on the distances between leaves of the tree. To reduce the exponential tree search space, we proved that each $k$-IPCG has a binary tree as a witness tree.
This method is applied to graphs with eight, nine, and ten vertices to generate $2$-IPCGs. Through computational experiments, we showed that all graphs with eight, nine, and ten vertices are 2-IPCGs. The method successfully identified all $2$-IPCGs with eight and nine vertices in reasonable times of 700 seconds and 36,000 seconds, respectively.
However, for graphs with ten vertices, the computational time increases significantly to 96 days. 
There are 1,018,997,864 graphs with eleven vertices, making it necessary to improve the method to handle larger numbers of vertices efficiently.
The experiments also reveal that the selection of random weights can affect the computation time of the method. Thus, it would be interesting to further explore the relationship between weight selection and the performance of the method for potential improvements.

\newpage
\centerline{\bf\LARGE Appendix}
Details of the experimental results for graphs with ten vertices. 
\begin{longtable}{|c|c|c|c|c|}
\hline
\multicolumn{5}{|c|}{\textbf{Time: 30 hours}} \\
\hline
\multicolumn{5}{|c|}{$|\mathcal{G}|$ = 12,005,168, Weights [1, 20] and [1, 50]} \\
\hline 
Tree & \#Graphs & \#Rounds & \#2-IPCGs & \#Graphs Left \\
\endfirsthead
\hline
Tree & \#Graphs & \#Rounds & \#2-IPCG & \#Graphs Left \\ \hline
\endhead 
\hline
\multicolumn{5}{|c|}{Continued} \\ \hline
\endfoot
\endlastfoot
\hline
${T_{1}}^{10}$ to ${T_{4}}^{10}$ & 1,500,000 & 317 & 1,439,334 & 60,666 \\
 & 1,500,000 & 167 & 1,117,992 & 382,008 \\
 & 1,500,000 & 264 & 1,068,231 & 431,769 \\
 & 1,500,000 & 192 & 980,412 & 519,588 \\
 & 1,500,000 & 222 & 893,981 & 606,019 \\
 & 1,500,000 & 117 & 751,320 & 748,680 \\
 & 1,500,000 & 258 & 899,023 & 600,977 \\
 & 1,505,168 & 224 & 1,021,259 & 483,909 \\

${T_{5}}^{10}$ to ${T_{8}}^{10}$ & 1,500,000 & 185 & 775,950 & 724,050 \\
 & 1,500,000 & 148 & 584,606 & 915,394 \\
 & 833,616 & 429 & 564,865 & 268,751 \\
${T_{9}}^{11}$ to ${T_{11}}^{10}$ & 1,000,000 & 330 & 744,834 & 255,166 \\
 & 908,195 & 343 & 459,042 & 449,153 \\
\hline
\multicolumn{5}{|c|}{$|\mathcal{G}|$ = 1,000,802, Weights [1, 40] and [1, 70]} \\
\hline 
${T_{1}}^{10}$ to ${T_{4}}^{10}$ & 500,000 & 156 & 187,365 & 312,635 \\
 & 500,804 & 483 & 345,019 & 155,785 \\
${T_{5}}^{10}$ to ${T_{8}}^{10}$ & 253,342 & 436 & 161,756 & 91,586 \\
 & 215,078 & 544 & 143,617 & 71,461 \\
${T_{9}}^{11}$ to ${T_{11}}^{10}$ & 91,586 & 534 & 46,966 & 44,620 \\
 & 71,461 & 741 & 40,008 & 31,453 \\
\hline
\multicolumn{5}{|c|}{$|\mathcal{G}|$ = 76,071, Weights [1, 70] and [1, 100]} \\
\hline 
${T_{1}}^{10}$ to ${T_{4}}^{10}$ & 44,620 & 290 & 20,121 & 24,499 \\
 & 31,453 & 415 & 16,347 & 15,106 \\
${T_{5}}^{10}$ to ${T_{8}}^{10}$ & 24,499 & 282 & 9,389 & 15,110 \\
 & 15,106 & 399 & 6,623 & 8,483 \\
${T_{9}}^{11}$ to ${T_{11}}^{10}$ & 15,110 & 428 & 5,220 & 9,890 \\
 & 8,483 & 533 & 2,892 & 5,591 \\
\hline
\multicolumn{5}{|c|}{$|\mathcal{G}|$ = 15,479, Weights [1, 90] and [1, 120]} \\
\hline 
${T_{1}}^{10}$ to ${T_{4}}^{10}$ & 9,890 & 302 & 3,853 & 6,037 \\
 & 5,591 & 379 & 2,325 & 3,266 \\
${T_{5}}^{10}$ to ${T_{8}}^{10}$ & 6,037 & 298 & 1,896 & 4,141 \\
 & 3,266 & 372 & 1,053 & 2,213 \\
${T_{9}}^{11}$ to ${T_{11}}^{10}$ & 4,141 & 464 & 1,257 & 2,884 \\
 & 2,213 & 372 & 559 & 1,654 \\
\hline
\multicolumn{5}{|c|}{$|\mathcal{G}|$ = 4,536, Weights [1, 110] and [1, 140]} \\
\hline 
${T_{1}}^{10}$ to ${T_{4}}^{10}$ & 2,884 & 297 & 881 & 2,003 \\
 & 1,654 & 353 & 548 & 1,106 \\
${T_{5}}^{10}$ to ${T_{8}}^{10}$ & 2,003 & 281 & 502 & 1,501 \\
 & 1,106 & 337 & 282 & 824 \\
${T_{9}}^{10}$ to ${T_{11}}^{10}$ & 1499 & 436 & 372 & 1,129 \\
 & 824 & 461 & 170 & 654 \\
\hline
\multicolumn{5}{|c|}{$|\mathcal{G}|$ = 1,781, Weights [1, 130] and [1, 160]} \\
\hline
${T_{1}}^{10}$ to ${T_{4}}^{10}$ & 1,127 & 267 & 291 & 836 \\
 & 654 & 323 & 158 & 496 \\
${T_{5}}^{10}$ to ${T_{8}}^{10}$ & 836 & 275 & 164  & 672 \\
 & 496 & 318  & 82  & 414  \\
${T_{9}}^{10}$ to ${T_{11}}^{10}$ & 672 & 446  & 138  & 534  \\
& 414 & 451  & 56  &  358 \\
\hline
\multicolumn{5}{|c|}{$|\mathcal{G}|$ = 892, Weights [1, 150] and [1, 180]} \\
\hline
${T_{1}}^{10}$ to ${T_{4}}^{10}$ & 892 & 118 & 82 & 810 \\
${T_{5}}^{10}$ to ${T_{8}}^{10}$ & 810 & 131 & 75  & 735 \\
${T_{9}}^{10}$ to ${T_{11}}^{10}$ & 735 & 190 & 61  & 674  \\
\hline
\multicolumn{5}{|c|}{$|\mathcal{G}|$ = 674, Weights [1, 170] and [1, 200]} \\
\hline
${T_{1}}^{10}$ to ${T_{4}}^{10}$ & 674 & 130 & 81 & 593 \\
${T_{5}}^{10}$ to ${T_{8}}^{10}$ & 593 & 126 & 56  & 537 \\
${T_{9}}^{10}$ to ${T_{11}}^{10}$ & 537 & 179 & 36  & 501  \\
\hline
\multicolumn{5}{|c|}{$|\mathcal{G}|$ = 501, Weights [1, 190] and [1, 220]} \\
\hline
${T_{1}}^{10}$ to ${T_{4}}^{10}$ & 501 & 129 & 51 & 450 \\
${T_{5}}^{10}$ to ${T_{8}}^{10}$ & 450 & 128 & 36  & 414 \\
${T_{9}}^{10}$ to ${T_{11}}^{10}$ & 414 & 169 & 12  & 402  \\
\hline
\multicolumn{5}{|c|}{$|\mathcal{G}|$ = 402, Weights [1, 210] and [1, 240]} \\
\hline 
${T_{4}}^{10}$ & 402 & 872 & 67  & 335 \\
${T_{3}}^{10}$  & 335 & 833 & 41  & 294 \\
${T_{8}}^{10}$  & 294 & 843 & 31  & 263 \\
${T_{2}}^{10}$  & 263 & 844 & 26 & 237 \\
\hline
\multicolumn{5}{|c|}{ Weights [1, 230] and [1, 260]} \\
\hline 
${T_{1}}^{10}$ & 237 & 831 & 23  & 214 \\
${T_{6}}^{10}$  & 214 & 842 & 15  & 199 \\
${T_{4}}^{10}$  & 199 & 880 & 21  & 178 \\
${T_{3}}^{10}$  & 178 & 919 & 17  & 161 \\
\hline
\multicolumn{5}{|c|}{ Weights [1, 250] and [1, 280]} \\
\hline 
${T_{4}}^{10}$ & 161 & 927 & 32  & 129 \\
${T_{3}}^{10}$  & 129 & 908  & 16  & 113 \\
\hline
\multicolumn{5}{|c|}{\textbf{Time 10 hours}} \\
\hline
\multicolumn{5}{|c|}{ Weights [1, 270] and [1, 300]} \\

\hline 
${T_{4}}^{10}$ & 113 & 319 & 6  & 107 \\
\hline
\multicolumn{5}{|c|}{Weights [1, 290] and [1, 320]} \\
\hline 
${T_{4}}^{10}$ & 107 & 301 & 6  & 101 \\
\hline
\multicolumn{5}{|c|}{ Weights [1, 310] and [1, 340]} \\
\hline
${T_{3}}^{10}$ & 101 & 311 & 3  & 98 \\
\hline
\multicolumn{5}{|c|}{Weights [1, 330] and [1, 360]} \\
\hline
${T_{4}}^{10}$ & 98 & 306 & 7  & 91 \\
\hline
\multicolumn{5}{|c|}{Weights [1, 370] and [1, 400]} \\
\hline 
${T_{3}}^{10}$ & 91  & 318 & 4  & 87 \\
\hline
\multicolumn{5}{|c|}{ Weights [1, 390] and [1, 420]} \\
\hline
${T_{4}}^{10}$ & 87 & 288 & 2  & 85 \\
\hline
\multicolumn{5}{|c|}{Weights [1, 410] and [1, 440]} \\
\hline 
${T_{1}}^{10}$ & 85 & 285 & 4  & 81 \\
\hline
\multicolumn{5}{|c|}{ Weights [1, 450] and [1, 480]} \\
\hline 
${T_{3}}^{10}$ & 81 & 306 & 6  & 75 \\
\hline
\multicolumn{5}{|c|}{\textbf{Time: 5 hours}} \\
\hline
\multicolumn{5}{|c|}{$|\mathcal{G}|$ = 75, Weights [5, 25] and [5, 50]} \\
\hline
${T_{4}}^{10}$ & 75 & 402 & 5 & 70 \\
\hline
\multicolumn{5}{|c|}{Weights [12, 50] and [15, 65]} \\
\hline 
${T_{4}}^{10}$ & 70 & 311 & 3  & 67\\
\hline
\multicolumn{5}{|c|}{ Weights [5, 40] and [20, 70]} \\
\hline
${T_{3}}^{10}$ & 67 & 314 & 3  & 64\\
\hline
\multicolumn{5}{|c|}{ Weights [18, 55] and [25, 80]} \\
\hline
${T_{1}}^{10}$ & 64 & 266 & 2  & 62\\
\hline
\multicolumn{5}{|c|}{ Weights [25, 60] and [10, 85]} \\
\hline
${T_{8}}^{10}$ & 62 & 238 & 2  & 60\\
\hline
\multicolumn{5}{|c|}{ Weights [1, 70] and [20, 100]} \\
\hline
${T_{4}}^{10}$ & 60 & 205 & 5  & 55\\
\hline
\multicolumn{5}{|c|}{ Weights [30, 80] and [1, 115]} \\
\hline
${T_{3}}^{10}$ & 55 & 218 & 3  & 52\\
\hline
\multicolumn{5}{|c|}{ Weights [50, 100] and [1, 130]} \\
\hline
${T_{4}}^{10}$ & 52 & 221 & 2  & 50\\
\hline
\multicolumn{5}{|c|}{ Weights [65, 150] and [55, 180]} \\
\hline
${T_{4}}^{10}$ & 50 & 206 & 2  & 48\\
\hline
\multicolumn{5}{|c|}{ Weights [90, 190] and [100, 215]} \\
\hline
${T_{4}}^{10}$ & 48 & 177 & 1  & 47\\
\hline
\multicolumn{5}{|c|}{ Weights [1, 470] and [1, 500]} \\
\hline
${T_{4}}^{10}$ & 47 & 167 & 1  & 46\\
\hline
\multicolumn{5}{|c|}{ Weights [1, 500] and [1, 530]} \\
\hline
${T_{2}}^{10}$ & 46 & 174 & 1  & 45\\
\hline
\multicolumn{5}{|c|}{ Weights [1, 530] and [1, 570]} \\
\hline
${T_{3}}^{10}$ & 45 & 123 & 1  & 44\\
\hline
\multicolumn{5}{|c|}{ Weights [1, 550] and [1, 590]} \\
\hline
${T_{4}}^{10}$ & 44 & 144 & 1  & 43\\
\hline
\multicolumn{5}{|c|}{ Weights [1, 450] and [1, 600]} \\
\hline
${T_{3}}^{10}$ & 43 & 153 & 1  & 42\\
\hline
\multicolumn{5}{|c|}{ Weights [1, 470] and [1,620]} \\
\hline
${T_{4}}^{10}$ & 42 & 150 & 1  & 41\\
\hline
\multicolumn{5}{|c|}{ Weights [1, 490] and [1, 640]} \\
\hline
${T_{4}}^{10}$ & 41 & 148 & 1  & 40\\
\hline
\multicolumn{5}{|c|}{ Weights [1, 600] and [1, 700]} \\
\hline
${T_{4}}^{10}$ & 40 & 170 & 3  & 37\\
\hline
\multicolumn{5}{|c|}{ Weights [1, 650] and [1, 750]} \\
\hline
${T_{4}}^{10}$ & 37 & 156 & 1  & 36\\
\hline
\multicolumn{5}{|c|}{\textbf{Time: 10 hours}} \\
\hline
\multicolumn{5}{|c|}{$|\mathcal{G}|$ = 36, Weights [1, 1,000] and [1, 1,100]} \\
\hline
${T_{1}}^{10}$ & 36 & 260 & 3 & 33 \\
\hline
\multicolumn{5}{|c|}{Weights [1, 3,000] and [1, 5,000]} \\
\hline 
${T_{1}}^{10}$ & 33 & 287 & 3  & 30\\
\hline
\multicolumn{5}{|c|}{ Weights [1, 20,000] and [1, 25,000]} \\
\hline
${T_{4}}^{10}$ & 30 & 212 & 4  & 26\\
\hline
\multicolumn{5}{|c|}{ Weights [1, 25,000] and [1, 30,000]} \\
\hline
${T_{4}}^{10}$ & 26 & 333 & 3  & 23\\
\hline
\multicolumn{5}{|c|}{ Weights [1, 30,000] and [1, 35,000]} \\
\hline
${T_{1}}^{10}$ & 23 & 346 & 2  & 21\\
\hline
\multicolumn{5}{|c|}{\textbf{Time: 20 hours}} \\
\hline
\multicolumn{5}{|c|}{$|\mathcal{G}|$ = 21, Weights [5, 25] and [5, 50]} \\
\hline
${T_{4}}^{10}$ & 21 & 1,848 & 4  & 17\\
\hline
\multicolumn{5}{|c|}{Weights [1, 70] and [20, 100]} \\
\hline 
${T_{4}}^{10}$ & 17 & 1,318 & 2  & 15\\
\hline
\multicolumn{5}{|c|}{ Weights [30, 80] and [1, 115]} \\
\hline
${T_{3}}^{10}$ & 15 & 1,104 & 1  & 14\\
\hline
\multicolumn{5}{|c|}{ Weights [5, 25] and [5, 50]} \\
\hline
${T_{3}}^{10}$ & 14 & 2,444 & 1  & 13\\
\hline
\multicolumn{5}{|c|}{ Weights [12, 50] and [15, 65]} \\
\hline
${T_{4}}^{10}$ & 13 & 2,015 & 1  & 12\\
\hline
\multicolumn{5}{|c|}{ Weights [5, 25] and [5, 50]} \\
\hline
${T_{1}}^{10}$ & 12 & 2,790 & 4  & 8\\
\hline
\multicolumn{5}{|c|}{ Weights [65, 150] and [55, 180]} \\
\hline
${T_{4}}^{10}$ & 8 & 1,387 & 1  & 7\\
\hline
\multicolumn{5}{|c|}{ Weights [20, 90] and [30, 70]} \\
\hline
${T_{3}}^{10}$ & 7 & 2,195  & 2  & 5\\
\hline
\multicolumn{5}{|c|}{ Weights [1, 1,000] and [1, 1,100]} \\
\hline
${T_{4}}^{10}$ & 5 & 1,189  & 1  & 4\\
\hline
\multicolumn{5}{|c|}{ Weights  [1, 600] and [1, 700]} \\
\hline
${T_{1}}^{10}$ & 4 & 2,652  & 2  & 2\\
\hline
\multicolumn{5}{|c|}{ Weights  [1, 450] and [1, 600]} \\
\hline
${T_{1}}^{10}$ & 2 & 4,559  & 1  & 1\\
\hline
\multicolumn{5}{|c|}{\textbf{Time: 6 mintues}} \\
\hline
\multicolumn{5}{|c|}{$|\mathcal{G}_{10}|$ = 1, Weights [1, 350] and [1, 450]} \\
\hline
${T_{4}}^{10}$ & 1 & 46 & 1  & 0\\
\hline
\caption{Computation Results of 2-IPCGs with ten Vertices}
\label{table:computation_time_10_vertices1}\\
\end{longtable}

\end{document}